\newtheorem{thm}{Theorem}
\newtheorem{lemma}{Lemma}
\newtheorem{assumption}{Assumption}
\newcommand{\mb}{\mathbf}
\newcommand{\mc}{\mathcal}
\newcommand{\Ef}{\mathbb{E}}
\newcommand{\E}[1]{\Ef\!\left(#1\right)}
\newcommand{\Pf}{{\mathrm{I}\!\mathrm{P}}}
\newcommand{\tree}{\mathbb{T}}
\newcommand{\var}{\mathrm{Var}}
\DeclareMathOperator*{\argmax}{argmax}
\newcommand{\ol}{\overline}
\newcommand{\ul}{\underline}
\newcommand{\floor}[1]{\lfloor #1 \rfloor}
\title{On the convergence of the maximum likelihood estimator for the transition rate under a $2$-state symmetric model}
\date{}
\author{
Lam Si Tung Ho\thanks{These authors contributed equally to this work.} \\
Department of Mathematics and Statistics \\
Dalhousie University, Halifax, Nova Scotia, Canada
\and
Vu Dinh$^{*}$ \\
Department of Mathematical Sciences\\
University of Delaware
\and
Frederick A.~Matsen IV \\
Program in Computational Biology \\
Fred Hutchinson Cancer Research Center
\and
Marc A.~Suchard \\
Departments of Biomathematics, Biostatistics and Human Genetics \\
University of California, Los Angeles
}
\begin{document}
\maketitle

\clearpage

\begin{abstract}
Maximum likelihood estimators are used extensively to estimate unknown parameters of stochastic trait evolution models on phylogenetic trees.
Although the MLE has been proven to converge to the true value in the independent-sample case, we cannot appeal to this result because trait values of different species are correlated due to shared evolutionary history.
In this paper, we consider a $2$-state symmetric model for a single binary trait and investigate the theoretical properties of the MLE for the transition rate in the large-tree limit.
Here, the large-tree limit is a theoretical scenario where the number of taxa increases to infinity and we can observe the trait values for all species.
Specifically, we prove that the MLE converges to the true value under some regularity conditions.
These conditions ensure that the tree shape is not too irregular, and holds for many practical scenarios such as trees with bounded edges, trees generated from the Yule (pure birth) process, and trees generated from the coalescent point process.
Our result also provides an upper bound for the distance between the MLE and the true value.
\end{abstract}

\section{Introduction}

The \emph{maximum likelihood estimator} (MLE) is frequently used to estimate unknown parameters in trait evolution models on phylogenetic trees.
To safely use this machinery, it is important to know that the MLE is \emph{consistent}: that is, the estimate converges to the true value as we observe trait values for more species.
However, traditional statistical theory only guarantees the consistency of the MLE when observations are independent and identically distributed.
In contrast, trait values of biological species are not independent because species are related to each other according to a phylogenetic tree.
Therefore, traditional consistency results are not directly applicable to trait evolution studies.
For this paper we consider the scenario where only a single trait is observed and the number of species increases to infinity.
This is different from the setting where many traits/characters are observed for the same set of species; this alternate setting typically assumes independence between traits.

Binary traits studied by comparative biologists come in various types, including morphological traits (whether a fruit fly has curly wings), behavioral traits (whether an antelope hides from predators), and geographical traits (whether a species is aquatic).
Here we study the consistency property of the MLE for estimating the transition rate of a binary trait evolved along a phylogenetic tree according to a $2$-state symmetric Markov process.
Under this $2$-state symmetric model, there is an unique rate of switching back and forth between the two states of a phenotype, which is the parameter of interest.
We show that under two mild conditions, the MLE of this transition rate is consistent, that is, the estimate converges to the correct value as we observe the trait values of more species.
These conditions ensure that edge lengths of the tree are not too small and the pairwise distances between leaves are neither too small nor too large.
We verify that these conditions holds for many practical cases including trees with bounded edges, trees generated from the Yule process \citep{yule1925mathematical}, and trees generated from the coalescent point process \citep{lambert2013birth}.

The consistency of the MLE does not always hold under trait evolution models.
Indeed, for estimating the ancestral state of the $2$-state symmetric model, \citet{li2008more} point out that the MLE can be inconsistent.
As a consequence, the estimate of the ancestral state may not be close to the true value no matter how many species have been sampled.
Several efforts have also been made to investigate the consistency of the MLE under evolution models of a continuous trait.
\citet{ane2008analysis} points out that unlike traditional linear regression, the MLE of the coefficients of phylogenetic linear regression under the Brownian motion model can be inconsistent.
Additionally, \citet{sagitov2012interspecies} show that the sample mean is an inconsistent estimator for the ancestral state under this model when the tree is generated from the Yule process.
For the Ornstein-Uhlenbeck model, \citet{ho2013asymptotic} show that if the height of the phylogenetic tree is bounded as more observations are collected, then the MLE of the selective optimum is not consistent.
Moreover, they discovered that in this scenario, no consistent estimator for the selective optimum exists.
Recently, \citet{ane2016phase} provide a necessary and sufficient condition for consistency of the MLE under the Ornstein-Uhlenbeck model.
Although the problem of reconstructing the ancestral state under the $2$-state symmetric model has been studied extensively \citep{tuffley1997links,mooers1999reconstructing,li2008more,mossel2014majority}, it remains unknown whether the transition rate can be estimated consistently.

In this paper, we show that the MLE is a consistent estimator for the transition rate of the $2$-state symmetric model under simple conditions.
We start with introducing the $2$-state symmetric model for binary traits and derive several statistical properties of this model in Section \ref{sec:2state}.
In Section \ref{sec:consistency}, we state two necessary conditions for the consistency of the MLE of the transition rate and provide a detailed proof for this result.
Section \ref{sec:app} verifies these conditions for several practical scenarios and illustrates our result through a simulation.



\section{Properties of the $2$-state symmetric model}
\label{sec:2state}

Let $\mu$ be the transition rate of a $2$-state symmetric Markov process.
Then, the transition probability matrix has an analytical form:
\begin{equation}
\mb{P}_\mu(t) = \begin{pmatrix} \frac{1}{2} + \frac{1}{2}e^{-2 \mu t} & \frac{1}{2} - \frac{1}{2}e^{-2 \mu t} \\ \frac{1}{2} - \frac{1}{2}e^{- 2 \mu t} & \frac{1}{2} + \frac{1}{2}e^{-2 \mu t} \end{pmatrix}.
\label{eqn:trans}
\end{equation}
Hence, the probability that the process switches state after $t$ unit of time is $\frac{1}{2} - \frac{1}{2}e^{- 2 \mu t}$.

In this paper, the term phylogenetic tree (or phylogeny) refers to a bifurcating rooted tree with leaves labeled by a set of taxon (species) names.
We can reroot a phylogenetic tree by moving the root to another location along the tree (see Figure \ref{fig:reroot}).
The evolution of a binary trait along a tree is modeled using the $2$-state symmetric Markov process as follows.
At each node in the phylogeny, the children inherit the trait value of their parent and the trait of each child evolves independently of one another.

\begin{figure}[h]
\centering
\includegraphics[width=0.8\textwidth]{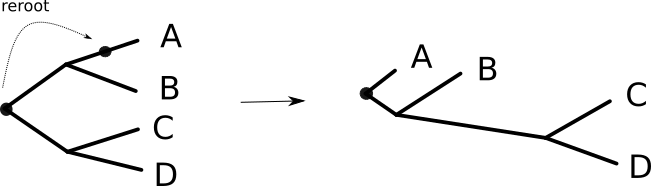}
\caption{Example of rerooting a 4-taxa tree.}
\label{fig:reroot}
\end{figure}

Let $\tree$ be a phylogenetic tree with $n$ leaves, and $\mb{Y}$ be the trait values at the leaves of $\tree$.
We assume that the ancestral state at the root $\rho$ of $\tree$ follows a stationary distribution, which is a Bernoulli distribution with success probability $1/2$.
Let $E$ be the set of all edges of $\tree$.
The joint probability distribution of $\mb{Y}$ is
\[
P_{\tree,\mu}(\mb{Y}) := \Pf(\mb{Y}~|~\tree,\mu) =  \frac{1}{2} \sum_{y}{\left ( \prod_{(u,v)\in E}{[\mb {P}_\mu( d_{uv})}]_{y_u y_v} \right )}
\]
where $y$ ranges over all extensions of $\mb{Y}$ to the internal nodes of the tree, $y_u$ denotes the assigned state of node $u$ by $y$,  $d_{uv}$ is the edge length of $(u,v)$,  and $[\mb{P}_\mu(t)]_{kl}$ is the element at $k$-th row and $l$-th column of matrix $\mb{P}_\mu(t)$.
We define the log-likelihood function as $\ell_{\tree,\mu}(\mb{Y}) = \log P_{\tree,\mu}(\mb{Y})$.
Throughout this paper, we denote the true transition rate with $\mu^*$ and assume that $\mu^* \in [\ul \mu, \ol \mu]$ where $\ul \mu, \ol \mu$ are two known positive numbers.
Define
\[
R_{\tree,\mu}(\mb{Y}) = \ell_{\tree,\mu^*}(\mb{Y}) - \ell_{\tree,\mu}(\mb{Y}).
\]
Then the Kullback-Leibler divergence from $P_{\tree,\mu}$ to $P_{\tree,\mu^*}$ is
\[
\text{KL}(P_{\tree,\mu^*} \| P_{\tree,\mu}) = \Ef [\ell_{\tree,\mu^*}(\mb{Y}) - \ell_{\tree,\mu}(\mb{Y})] = \Ef [R_{\tree,\mu}(\mb{Y})]
\]
where $\Ef$ is the expectation with respect to $P_{\tree,\mu^*}$.

\begin{lemma}
Let  $\tree'$ be the phylogenetic tree obtained by rerooting a tree $\tree$, then
\[
\ell_{\tree',\mu}(\mb{Y}) = \ell_{\tree,\mu}(\mb{Y})
\]
for any trait values $\mb{Y}$ at the leaves.
\label{lem:reroot}
\end{lemma}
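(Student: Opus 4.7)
The plan is to invoke the classical \emph{pulley principle}: because the $2$-state symmetric Markov process is time-reversible with respect to its uniform stationary distribution, the tree likelihood does not depend on where we place the root. Two algebraic facts drive the argument, both readable directly from equation (\ref{eqn:trans}): the transition matrix is symmetric, $[\mb{P}_\mu(t)]_{ij} = [\mb{P}_\mu(t)]_{ji}$, and the family satisfies Chapman--Kolmogorov, $\mb{P}_\mu(s)\mb{P}_\mu(t) = \mb{P}_\mu(s+t)$.

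First I would use the symmetry of $\mb{P}_\mu(t)$ to observe that each factor $[\mb{P}_\mu(d_{uv})]_{y_u y_v}$ in the formula
\[
P_{\tree,\mu}(\mb{Y}) = \frac{1}{2} \sum_{y} \prod_{(u,v)\in E}[\mb{P}_\mu(d_{uv})]_{y_u y_v}
\]
is invariant under swapping the endpoints. Hence the product is really a function of the underlying unrooted edge set $\{\{u,v\} : (u,v)\in E\}$, and the expression for $P_{\tree,\mu}(\mb{Y})$ may be rewritten with edges regarded as unordered pairs. The position of the root enters only through the prefactor $1/2$, which is the value of the uniform stationary distribution and is therefore the same regardless of which node we single out as $\rho$.

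Next I would split rerooting into two cases, matching the geometric operations illustrated in Figure~\ref{fig:reroot}. If the new root $\rho'$ coincides with an existing node of $\tree$, then $\tree'$ and $\tree$ have identical unordered edge sets with identical lengths, so the rewritten formula yields $P_{\tree',\mu}(\mb{Y}) = P_{\tree,\mu}(\mb{Y})$ immediately. If instead $\rho'$ lies in the interior of some edge $\{u,v\}$ of length $d$, subdividing it into two edges of lengths $d_1$ and $d_2$ with $d_1+d_2=d$, then summing over the state at $\rho'$ and applying the Chapman--Kolmogorov identity gives
\[
\sum_{y_{\rho'}} [\mb{P}_\mu(d_1)]_{y_u y_{\rho'}}[\mb{P}_\mu(d_2)]_{y_{\rho'} y_v} = [\mb{P}_\mu(d)]_{y_u y_v},
\]
which collapses the sum in $P_{\tree',\mu}(\mb{Y})$ back to $P_{\tree,\mu}(\mb{Y})$. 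Taking logarithms finishes the proof.

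The argument is largely bookkeeping, so there is no deep obstacle; the only care required is to be explicit about the case where rerooting inserts a degree-$2$ vertex on an existing edge, since that changes the edge set itself and must be handled via Chapman--Kolmogorov rather than the symmetry of $\mb{P}_\mu$ alone. Once both cases are dispatched, the root-independence of the likelihood, and hence of the log-likelihood, follows.
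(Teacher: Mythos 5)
Your argument is correct, but it is more detailed than what the paper does: the paper offers no proof at all for Lemma~\ref{lem:reroot}, simply remarking that it is the ``pulley principle'' and is a direct consequence of the root being drawn from the stationary distribution, with a citation to Felsenstein (1981). You instead verify the principle from scratch, and the two ingredients you isolate are exactly the right ones for this model: symmetry of $\mb{P}_\mu(t)$ in \eqref{eqn:trans} is precisely reversibility with respect to the uniform stationary distribution (so the factor $\tfrac12$ can be attached to any node and edge orientations are irrelevant), and Chapman--Kolmogorov lets you sum out any degree-$2$ vertex. What your write-up buys is a self-contained, checkable proof rather than an appeal to the literature; what the paper buys is brevity, since the pulley principle is standard. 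One small bookkeeping point to tighten: in your ``new root coincides with an existing node'' case, the edge sets of $\tree$ and $\tree'$ are identical only if the old root is retained as a degree-$2$ vertex of $\tree'$; if, as is customary, the old root is suppressed and its two incident edges are merged, you need the same Chapman--Kolmogorov collapse there as in your second case (and likewise the old root must be summed out in your second case). This is exactly the mechanism you already describe, so it is an omission of wording rather than of substance: summing over the state at any degree-$2$ node, with the symmetric transition matrices, reduces both $P_{\tree,\mu}(\mb{Y})$ and $P_{\tree',\mu}(\mb{Y})$ to the same root-free expression over the unrooted tree, and taking logarithms gives the claim.
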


This lemma, sometimes called the ``pulley principle,'' is a direct consequence of the fact that the ancestral state follows a stationary distribution \citep{felsenstein1981evolutionary}.
We also have the following Lemmas:

\begin{lemma}
Let $\tree$ be a rooted tree with root $\rho$, and $\mb{Y}$ be the trait values at the leaves of $\tree$ generated under the $2$-state symmetric model.
Let $h$ be a function such that $h(\mb{Y}) = h(\mb{1} - \mb{Y})$, where $\mb{1}$ denotes the all-ones vector.
Then, $h(\mb{Y})$ and the trait value at $\rho$ are independent.
\label{lem:indep}
\end{lemma}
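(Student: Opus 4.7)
The plan is to exploit the symmetry of the transition matrix $\mb{P}_\mu(t)$ in \eqref{eqn:trans}, which satisfies the ``state-flip'' identity $[\mb{P}_\mu(t)]_{ij} = [\mb{P}_\mu(t)]_{(1-i)(1-j)}$ for all $i,j\in\{0,1\}$. Together with the uniform stationary distribution at the root, this will make the law of $\mb{Y}$ invariant, up to conditioning, under flipping every state in the tree.

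First I would show that for any leaf configuration $\mb{y}\in\{0,1\}^n$,
\[
\Pf(\mb{Y}=\mb{y} \mid Y_\rho = 0) = \Pf(\mb{Y}=\one-\mb{y} \mid Y_\rho = 1).
\]
Conditional on $Y_\rho=0$, the left-hand side is the sum over extensions $y$ of $\mb{y}$ to the internal nodes (with $y_\rho=0$) of $\prod_{(u,v)\in E}[\mb{P}_\mu(d_{uv})]_{y_u y_v}$. The map $y\mapsto \one-y$ is a bijection between extensions with $y_\rho=0$ and leaf values $\mb{y}$ and extensions with $y_\rho=1$ and leaf values $\one-\mb{y}$, and the state-flip identity preserves every edge factor, hence the whole product. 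This gives the claimed equality term-by-term.

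Next, using $h(\mb{Y})=h(\one-\mb{Y})$, for any measurable set $A$,
\[
\Pf(h(\mb{Y})\in A \mid Y_\rho = 0) = \sum_{\mb{y}:\, h(\mb{y})\in A}\Pf(\mb{Y}=\one-\mb{y} \mid Y_\rho = 1) = \Pf(h(\mb{Y})\in A \mid Y_\rho = 1),
\]
where in the last step I reindex by $\mb{z}=\one-\mb{y}$ and use $h(\mb{z})=h(\one-\mb{z})$ to see that the set $\{\mb{z}:h(\mb{z})\in A\}$ is the image of the summation set. Since the conditional distribution of $h(\mb{Y})$ given $Y_\rho$ does not depend on the value of $Y_\rho\in\{0,1\}$, the statistic $h(\mb{Y})$ is independent of $Y_\rho$.

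The only real work is the bookkeeping in the first step, namely verifying that simultaneous flipping of every internal and leaf state induces a bijection on extensions and preserves each edge factor; this is routine thanks to the symmetric form of $\mb{P}_\mu(t)$, and I do not anticipate any substantive technical obstacle.
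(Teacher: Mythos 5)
Your proof is correct and follows essentially the same route as the paper's: both rest on the symmetry identity $\Pf(\mb{Y}=\mb{y}\mid \rho = 0) = \Pf(\mb{1}-\mb{Y}=\mb{y}\mid \rho = 1)$ and then use $h(\mb{Y})=h(\mb{1}-\mb{Y})$ to show the conditional law of $h(\mb{Y})$ is the same for both root states, hence independence. The only difference is that you spell out the state-flip bijection on extensions, which the paper simply invokes as ``by symmetry.''
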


The following two lemmas concern the regularity of the log-likelihood function.
In particular, Lemma \ref{lem:subtrees} shows that the log-likelihood function of a binary tree can be bounded by the sum of the log-likelihood functions of its two subtrees.
\begin{lemma}
Let $\rho_1$ and $\rho_2$ be two direct descendants of $\rho$, and $\tree_1$ and $\tree_2$ the two subtrees descending from them.
Let $\mb{Y}_1$, $\mb{Y}_2$ be the observations at the leaves of $\tree_1$ and $\tree_2$.
We have
\[
|\ell_{\tree,\mu}(\mb{Y}) - \ell_{\tree_1,\mu}(\mb{Y}_1) - \ell_{\tree_2,\mu}(\mb{Y}_2)| \leq \max \left\{\log 2, \log \frac{1}{1 - e^{-2 \mu d}} \right\}.
\]
where $d$ is the tree distance between $\rho_1$ and $\rho_2$.
\label{lem:subtrees}
\end{lemma}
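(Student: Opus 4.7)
The plan is to reduce the statement to a two-node algebraic identity by invoking the pulley principle. First I would apply Lemma~\ref{lem:reroot} to reroot $\tree$ at $\rho_1$; after this reroot, the edge from $\rho_1$ to $\rho_2$ carries the full length $d$, while the subtrees $\tree_1$ and $\tree_2$ are otherwise unchanged. Felsenstein's pruning then writes
\[
P_{\tree,\mu}(\mb{Y}) \;=\; \tfrac{1}{2}\sum_{s_1,s_2 \in \{0,1\}} [\mb{P}_\mu(d)]_{s_1 s_2}\, L_1(s_1)\, L_2(s_2),
\]
where $L_i(s)$ is the conditional probability of observing $\mb{Y}_i$ given state $s$ at $\rho_i$. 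Applying the same pruning rule separately to each subtree gives $P_{\tree_i,\mu}(\mb{Y}_i) = \tfrac{1}{2}(L_i(0) + L_i(1))$.

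Next I would substitute the explicit form of $\mb{P}_\mu(d)$ from Eqn.~\eqref{eqn:trans} and simplify. Writing $a_i = L_i(0)$, $b_i = L_i(1)$, and setting $A = a_1 a_2 + b_1 b_2$ and $B = a_1 b_2 + b_1 a_2$, the diagonal entries of $\mb{P}_\mu(d)$ equal $\tfrac12 + \tfrac12 e^{-2\mu d}$ and the off-diagonal ones equal $\tfrac12 - \tfrac12 e^{-2\mu d}$, so a short computation gives $P_{\tree,\mu}(\mb{Y}) = \tfrac{1}{4}\bigl[(A+B) + e^{-2\mu d}(A - B)\bigr]$, while $P_{\tree_1,\mu}(\mb{Y}_1) P_{\tree_2,\mu}(\mb{Y}_2) = \tfrac{1}{4}(A + B)$. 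The key algebraic observation is the factorization
\[
A + B = (a_1 + b_1)(a_2 + b_2), \qquad A - B = (a_1 - b_1)(a_2 - b_2),
\]
which allows me to conclude
\[
\frac{P_{\tree,\mu}(\mb{Y})}{P_{\tree_1,\mu}(\mb{Y}_1)\, P_{\tree_2,\mu}(\mb{Y}_2)} \;=\; 1 + e^{-2\mu d}\cdot \frac{(a_1 - b_1)(a_2 - b_2)}{(a_1 + b_1)(a_2 + b_2)}.
\]

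Because $a_i, b_i \ge 0$ implies $|a_i - b_i| \le a_i + b_i$, the right-most fraction lies in $[-1,1]$, so the ratio itself lies in the interval $[1 - e^{-2\mu d},\, 1 + e^{-2\mu d}]$. Taking logarithms then yields the two-sided bound
\[
\log(1 - e^{-2\mu d}) \;\le\; \ell_{\tree,\mu}(\mb{Y}) - \ell_{\tree_1,\mu}(\mb{Y}_1) - \ell_{\tree_2,\mu}(\mb{Y}_2) \;\le\; \log(1 + e^{-2\mu d}),
\]
and the stated bound follows from $\log(1 + e^{-2\mu d}) \le \log 2$ together with $|\log(1 - e^{-2\mu d})| = \log\tfrac{1}{1-e^{-2\mu d}}$. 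I do not expect a real obstacle here: the only genuinely non-routine step is recognizing the factorization of $A \pm B$ into products of $(a_i \pm b_i)$, after which the estimate collapses to an elementary inequality and bookkeeping.
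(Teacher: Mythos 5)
Your proof is correct and follows essentially the same route as the paper: both condition on the states at $\rho_1$ and $\rho_2$ to express $P_{\tree,\mu}(\mb{Y})$ as a bilinear form in the conditional subtree likelihoods with coefficients $\tfrac{1}{4}(1 \pm e^{-2\mu d})$, and compare it with $P_{\tree_1,\mu}(\mb{Y}_1)P_{\tree_2,\mu}(\mb{Y}_2)$. The only cosmetic difference is that the paper bounds the two coefficients termwise to obtain $(1-e^{-2\mu d})\,P_{\tree_1,\mu}(\mb{Y}_1)P_{\tree_2,\mu}(\mb{Y}_2) \le P_{\tree,\mu}(\mb{Y}) \le 2\,P_{\tree_1,\mu}(\mb{Y}_1)P_{\tree_2,\mu}(\mb{Y}_2)$ directly, whereas you compute the exact ratio via the factorization of $A\pm B$; the resulting bound is identical.
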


\begin{lemma}[Uniform Lipschitz]
There exists $C>0$ such that
\[
\left|\frac{1}{n}\ell_{\tree, \mu_1}(\mb{Y}) - \frac{1}{n}\ell_{\tree, \mu_2}(\mb{Y}) \right| \le  C |\mu_1 - \mu_2| \qquad \forall \mu_1, \mu_2 \in [\underline{\mu}, \overline{\mu}], \forall \tree
\]
where $n$ denotes the number of leaves of $\tree$.
\label{lem:Lipschitz}
\end{lemma}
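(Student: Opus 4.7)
\medskip

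\noindent\textbf{Proof plan for Lemma~\ref{lem:Lipschitz}.}

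The plan is to bound the derivative of $\ell_{\tree,\mu}(\mb{Y})$ with respect to $\mu$ edge by edge, using the sum-product representation of $P_{\tree,\mu}(\mb{Y})$ and a multiplicative comparison between the integrands at $\mu_1$ and at $\mu_2$. Write
\[
P_{\tree,\mu}(\mb{Y}) = \frac{1}{2}\sum_{y} W_\mu(y), \qquad W_\mu(y) := \prod_{(u,v)\in E}[\mb{P}_\mu(d_{uv})]_{y_u y_v},
\]
where the sum ranges over extensions of $\mb{Y}$ to internal nodes. Since a rooted bifurcating tree on $n$ leaves has $2n-2$ edges, $\log W_\mu(y)$ is a sum of $2n-2$ terms in $\mu$. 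The strategy is to show each term is Lipschitz in $\mu$ with constant $1/\underline{\mu}$, independently of $d$, $y_u$, and $y_v$.

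To that end, I will differentiate $\log[\mb{P}_\mu(d)]_{ij}$ with respect to $\mu$ in the two cases. When $i=j$, $\frac{d}{d\mu}\log\frac{1+e^{-2\mu d}}{2} = -\frac{2d\, e^{-2\mu d}}{1+e^{-2\mu d}}$, whose magnitude is bounded by $2d\, e^{-2\mu d}$ and hence by $1/(e\mu) \le 1/\underline{\mu}$. When $i\ne j$, which is the potentially delicate case because $\log(1-e^{-2\mu d})\to-\infty$ as $d\to 0$, I use $\frac{d}{d\mu}\log\frac{1-e^{-2\mu d}}{2} = \frac{2d}{e^{2\mu d}-1}$ together with the elementary inequality $e^x-1 \ge x$ to get the bound $1/\mu \le 1/\underline{\mu}$. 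This handling of the small-$d$ regime is the only subtle point; everything else is bookkeeping.

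Combining the per-edge bounds by the mean value theorem yields $|\log[\mb{P}_{\mu_1}(d)]_{ij} - \log[\mb{P}_{\mu_2}(d)]_{ij}| \le |\mu_1-\mu_2|/\underline{\mu}$, so with $K := (2n-2)|\mu_1-\mu_2|/\underline{\mu}$ we obtain
\[
e^{-K}\,W_{\mu_2}(y) \;\le\; W_{\mu_1}(y) \;\le\; e^{K}\,W_{\mu_2}(y) \qquad \text{for every extension } y.
\]
Summing these pointwise inequalities over $y$ preserves the multiplicative gap, giving the same bounds for $P_{\tree,\mu}(\mb{Y})$; taking logarithms yields $|\ell_{\tree,\mu_1}(\mb{Y}) - \ell_{\tree,\mu_2}(\mb{Y})| \le K \le (2n/\underline{\mu})|\mu_1-\mu_2|$. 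Dividing through by $n$ establishes the lemma with $C = 2/\underline{\mu}$, a constant that depends only on the known lower bound $\underline{\mu}$ and neither on $\tree$ nor on $\mb{Y}$.
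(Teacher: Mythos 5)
Your proposal is correct and follows essentially the same route as the paper: a per-edge Lipschitz bound on $\log[\mb{P}_\mu(d)]_{ij}$ obtained from the mean value theorem, turned into a multiplicative comparison of the transition probabilities, propagated through the sum-product representation over all $\mathcal{O}(n)$ edges, and then converted back by taking logarithms and dividing by $n$. The only difference is cosmetic: you make the per-edge constant explicit ($1/\underline{\mu}$, via the inequality $e^x-1\ge x$ and the case split $i=j$ versus $i\ne j$), whereas the paper simply observes that $\sup_{t\ge 0,\,\mu\in[\underline{\mu},\overline{\mu}]} t e^{-2\mu t}/(1-e^{-2\mu t})$ is finite.
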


The proofs of Lemmas \ref{lem:indep}, \ref{lem:subtrees}, and \ref{lem:Lipschitz} are provided in the Appendix \ref{sec:proof}.
Henceforth we will also use $\rho$ to denote the trait value at the root by an abuse of notation.


\section{Convergence of the MLE of the transition rate}
\label{sec:consistency}

The MLE of $\mu$ is defined as follows:
\[
\hat \mu = \argmax_{\mu \in [\ul \mu, \ol \mu]} \ell_{\tree,\mu}(\mb{Y}).
\]
In this section, we will state our main result regarding the consistency of $\hat \mu$.
We will need to make two assumptions to ensure that the shape of $\tree$ is not too irregular.
Let us define
\[
f(x) = \max \left\{\log 2,\log \frac{1}{1-e^{-2x}}\right\} \text{ and } S_{\tree, \mu}= \sum_{e \in E(\tree)}{f(\mu t_e)^2}.
\]
where $t_e$ denotes the edge length of edge $e$.

\begin{assumption}
$S_{\tree, \mu} = \mc{O}(n^{\gamma}$) for some $1\leq \gamma < 2$.
That is, there exists a universal constant $c < \infty$ such that  $S_{\tree, \mu}< c n^{\gamma} $.
\label{assump:length}
\end{assumption}

\begin{assumption}
There exist $\Omega(n)$ pairs of leaves such that the paths connecting each pair are pairwise disconnected and their lengths are bounded in some fixed range $[\ul d, \ol d]$.
Here, $\Omega(n)$ denotes a quantity that is greater than $c n$ for a positive constant $c$ and all $n$.
\label{assump:contrast}
\end{assumption}

\noindent
Assumption \ref{assump:length} makes sure that the edge lengths of $\tree$ are not too small.
It is worth noticing that $1/(1 - e^{-2x}) \leq 1/x$ when $x$ is small enough.
Therefore, this assumption holds when the smallest edge length is $\Omega(e^{-n (\gamma - 1)/2})$.
On the other hand, Assumption \ref{assump:contrast} guarantees that the pairwise distances between leaves of $\tree$ do not vary too extremely.
In Section \ref{sec:app}, we will verify these assumptions for several common tree models.
Although we employ rerooting in proofs below, Assumption \ref{assump:length} is for the original root of the tree.

\begin{thm}
Under Assumptions \ref{assump:length} and \ref{assump:contrast}, for any $\delta > 0$, $\forall \alpha \in \left ( \max \left \{ \frac{\log(2)}{\log(3/2)}, \gamma \right \}, 2 \right )$, there exists a constant $ C_{\delta,\alpha,\ul d, \ol d, \ul \mu, \ol \mu} > 0$ such that
\[
|\hat \mu - \mu^*| \leq C_{\delta,\alpha,\ul d, \ol d, \ul \mu, \ol \mu} \left( \frac{\sqrt{\log n}}{n^{(2 - \alpha)/6}}\right)
\]
with probability $1-\delta$.
\label{thm:consistency}
\end{thm}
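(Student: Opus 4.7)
I would follow the standard $M$-estimator template: combine (i) a quadratic lower bound on the Kullback--Leibler divergence $\Ef[R_{\tree,\mu}(\mb Y)]$ around $\mu^{*}$ with (ii) a uniform concentration bound for $R_{\tree,\mu}(\mb Y)$ around its mean, and conclude by exploiting that $R_{\tree,\hat\mu}(\mb Y)\le 0$ by the defining property of the MLE.

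For step (i), I would exploit Assumption~\ref{assump:contrast}. Let $(i_{\ell},j_{\ell})_{\ell=1}^{k}$ with $k=\Omega(n)$ be the pairs of leaves joined by edge-disjoint paths of lengths $d_{\ell}\in[\ul d,\ol d]$. The XOR $Z_{\ell}=Y_{i_{\ell}}\oplus Y_{j_{\ell}}$ is $\mathrm{Bern}\bigl(\tfrac12(1-e^{-2\mu d_{\ell}})\bigr)$ because it depends only on the edge-switching events along the path between $i_{\ell}$ and $j_{\ell}$. Edge-disjointness makes these events involve disjoint edges, hence the $Z_{\ell}$ are mutually independent. The data-processing inequality applied to $\mb Y\mapsto(Z_{\ell})_{\ell}$ together with a Taylor expansion of the Bernoulli KL then yields
\[
\Ef[R_{\tree,\mu}(\mb Y)]\ \ge\ \sum_{\ell=1}^{k}\mathrm{KL}\bigl(\mathrm{Bern}(p_{\ell}(\mu^{*}))\,\|\,\mathrm{Bern}(p_{\ell}(\mu))\bigr)\ \ge\ c_{1}\,n\,(\mu-\mu^{*})^{2}
\]
for some $c_{1}=c_{1}(\ul d,\ol d,\ul\mu,\ol\mu)>0$, uniformly over $\mu\in[\ul\mu,\ol\mu]$.

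For step (ii), I would first bound $\var(R_{\tree,\mu}(\mb Y))$. Iterating Lemma~\ref{lem:subtrees} decomposes $\ell_{\tree,\mu}$ into the log-likelihoods of two sibling subtrees plus an error of magnitude $f(\mu t_{e})$. Through the state-flip symmetry $\ell_{\tree,\mu}(\mb Y)=\ell_{\tree,\mu}(\mb 1-\mb Y)$, Lemma~\ref{lem:indep} ensures that each sibling log-likelihood is independent of the state at its subtree's root; coupled with the conditional independence of the siblings given the common parent state, this yields unconditional joint independence of the two sibling contributions, so variances add across each cut. Minkowski's inequality then propagates the recursion $\sqrt{\var(\ell_{\tree,\mu})}\le\sqrt{\var(\ell_{\tree_{1},\mu})+\var(\ell_{\tree_{2},\mu})}+2f(\mu t_{e})$, and controlling the geometric blow-up across bifurcations against the $S_{\tree,\mu}=\mc O(n^{\gamma})$ budget of Assumption~\ref{assump:length} gives a variance bound of order $n^{\alpha}$ valid for any $\alpha>\max\{\log 2/\log(3/2),\gamma\}$. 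Discretizing $[\ul\mu,\ol\mu]$ into $N=n^{\beta}$ points, applying Chebyshev at each grid point with a union bound, and paying $Cn/N$ for the Lipschitz interpolation from Lemma~\ref{lem:Lipschitz}, I obtain
\[
\sup_{\mu\in[\ul\mu,\ol\mu]}\bigl|R_{\tree,\mu}(\mb Y)-\Ef[R_{\tree,\mu}(\mb Y)]\bigr|\ \le\ C_{\delta}\bigl(n^{\alpha/2}\sqrt{N\log n}+n/N\bigr)
\]
with probability at least $1-\delta$, where the logarithmic factor arises from the union bound.

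Combining the two steps gives $c_{1}n(\hat\mu-\mu^{*})^{2}\le\Ef[R_{\tree,\hat\mu}]-R_{\tree,\hat\mu}\le C_{\delta}(n^{\alpha/2}\sqrt{N\log n}+n/N)$. Balancing the two right-hand-side terms forces $N\asymp n^{(2-\alpha)/3}$ (up to logarithmic factors), after which the bound is of order $n^{(\alpha+1)/3}\sqrt{\log n}$; dividing by $n$ and taking square roots produces the stated rate $|\hat\mu-\mu^{*}|\le C_{\delta,\alpha,\ul d,\ol d,\ul\mu,\ol\mu}\sqrt{\log n}/n^{(2-\alpha)/6}$. I expect the main obstacle to be the variance bound in step (ii): a naive Minkowski recursion introduces cross-terms that effectively amplify each sibling variance at every bifurcation, causing an exponential-in-depth blow-up. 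Taming it to $\mc O(n^{\alpha})$ requires both the sibling independence furnished by Lemma~\ref{lem:indep} at every cut and a delicate trade-off between the cut cost $f(\mu t_{e})^{2}$ and level-by-level amplification; this trade-off is precisely what forces the slack threshold $\alpha>\log 2/\log(3/2)$ and is the most delicate part of the proof.
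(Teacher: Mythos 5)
Your step (i) is exactly the paper's argument (independent squared-contrasts/XORs on the disjoint paths of Assumption \ref{assump:contrast}, data processing, and a quadratic lower bound on the Bernoulli divergence), and your concentration-plus-MLE assembly at the end ($R_{\tree,\hat\mu}\le 0$, uniform deviation bound via Chebyshev, grid, and Lipschitz interpolation) also matches the paper; your Minkowski recursion is essentially equivalent to the paper's Lemma \ref{lem:approx} (take $\omega=1+\epsilon$ in $(a+b)^2\le(1+\epsilon)a^2+(1+1/\epsilon)b^2$), and your observation that the two subtree contributions are \emph{unconditionally} independent (flip-invariance plus Lemma \ref{lem:indep}) is correct.

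The genuine gap is in \emph{where you cut}. You iterate Lemma \ref{lem:subtrees} at the original root, splitting into the two sibling subtrees, but nothing controls how unbalanced that split is. For a caterpillar tree the recursion has depth of order $n$: with any multiplicative slack $\omega>1$ the amplification is $\omega^{n}$, and with the pure Minkowski form $\mathrm{sd}(R_{\tree})\le\sqrt{\var R_{\tree_1}+\var R_{\tree_2}}+2f(\mu t_e)$ the unrolled bound is only $\mathrm{sd}(R_{\tree})\le\sum_e 2f(\mu t_e)$, hence $\var R_{\tree}\le 4n\,S_{\tree,\mu}=\mc{O}(n^{1+\gamma})$ by Cauchy--Schwarz, which is not sub-quadratic since $\gamma\ge 1$; the concentration step then collapses. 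The paper's essential ingredient, absent from your sketch, is to first invoke the pulley principle (Lemma \ref{lem:reroot}) and reroot at the midpoint of a Lipton--Tarjan separator edge, so that \emph{both} subtrees have at most $2n/3$ leaves; the induction hypothesis $\var R_{\tree,\mu}\le C_\alpha n^\alpha+\frac{8\omega}{\omega-1}n^\beta S_{\tree,\mu}$ then closes precisely when $\omega\le\min\{(3/2)^\beta,\tfrac12(3/2)^\alpha\}$, and the requirement $2(2/3)^\alpha<1$ is where the threshold $\alpha>\log 2/\log(3/2)$ actually comes from (it is a consequence of the $2/3$-balance, not of a generic depth-versus-cut-cost trade-off); one also needs the monotonicity of $f$ to check that the rerooted cut cost stays within the $S_{\tree,\mu}$ budget. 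A minor additional slip: with only Chebyshev available, the union bound over $N$ grid points costs a factor $\sqrt{N/\delta}$, not $\sqrt{\log n}$; after balancing this still yields the stated rate (the paper in fact gets it with no logarithmic factor), but your attribution of the $\sqrt{\log n}$ to the union bound is not right as written.
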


\noindent

From now on, we will use $\ell_{\tree, \mu}$ and $R_{\tree,\mu}$ as short for $\ell_{\tree, \mu}(\mb{Y})$ and $R_{\tree,\mu}(\mb{Y})$.
Recall that $\ell_{\tree, \mu}$ is the log-likelihood function and $R_{\tree,\mu} = \ell_{\tree, \mu^*} - \ell_{\tree, \mu}$ where $\mu^*$ is the true value of the transition rate.
The main ideas of the proof of Theorem \ref{thm:consistency} can be outlined as follows.
For a fixed value of $\mu$, we can view the function $R_{\tree, \mu}$ as the evidence to distinguish between $\mu$ and $\mu^*$.
We will prove that as the number of leaves approaches infinity, the information to distinguish $\mu$ and $\mu^*$, characterized by the KL divergence between $P_{\tree,\mu}$ to $P_{\tree,\mu^*}$ (which is equal to $\Ef(R_{\tree, \mu})$) increases linearly (Lemma \ref{lem:lowerbound}), while the associated uncertainty, characterized by the variance of $R_{T, \mu}$, only increases sub-quadratically (Lemma \ref{lem:boundvar}).
It is worth noting that while the results of Lemma \ref{lem:lowerbound} and Lemma \ref{lem:boundvar} are straightforward for independent and identically distributed data, the analyses for phylogenetic traits are more complicated due to the correlations between trait values at the leaves.
To overcome this issue, we use the independent phylogenetic contrasts, introduced in the next section, to obtain a lower bound on the information.
On the other hand, Lemma \ref{lem:subtrees} shows that the log-likelihood function of a binary tree can be bounded by the sum of the log-likelihood functions of its two subtrees, which allows us to exploit the sparse structure of the tree to derive an upper bound on uncertainty through an induction argument.
Finally, we obtain a uniform bound on the difference between the log-likelihood functions and its expected values (Lemma \ref{lem:concentration}), which enables us to derive an analysis of convergence of the MLE.

\subsection{Lower bound on information}
\label{sec:lower}

\paragraph{Phylogenetic contrasts}
Letting $i$ and $j$ be two different species, we define $\mc{C}_{ij} =Y_i - Y_j$ to be a contrast between the two species.
This is a popular notion introduced by \citet{felsenstein1985phylogenies} for computing the likelihood function under the Brownian motion model.
\citet{ane2016phase} use independent contrasts to construct consistent estimators for the covariance parameters of the Ornstein-Uhlenbeck model.
Here, we will introduce a notion of a squared-contrast $\mc{C}_{ij}^2$, which is simply the square of a contrast $\mc{C}_{ij}$, and show that squared-contrasts have the same independence properties as independent contrasts.
Let $\mc{A}_{ij}$ be the state of the most recent common ancestor of $i$ and $j$, $d_{ij}$ the tree distance from $i$ to $j$, and $p_{ij}$ the path on the tree connecting $i$ and $j$.
By symmetry of the model, we have
\begin{equation}
\Pf(\mc{C}_{ij}^2 = y~|~\mc{A}_{ij} = 0) = \Pf(\mc{C}_{ij}^2 = y~|~\mc{A}_{ij} = 1) =  \begin{cases}
\frac{1}{2}(1 + e^{-2 \mu d_{ij}}), & \mbox{if } y = 0 \\
\frac{1}{2}(1 - e^{-2 \mu d_{ij}}), & \mbox{if } y= 1. \end{cases}
\end{equation}
Therefore, $\mc{C}_{ij}^2$ and $\mc{A}_{ij}$ are independent.

\begin{lemma}[Sequence of contrasts]
Let $\{\mc{C}_{i_k j_k}\}_{k=1}^m$ be a sequence of contrasts such that any two paths in $\{p_{i_k j_k}\}_{k=1}^m$ have no common node.
Then $\{\mc{C}_{i_k j_k}^2\}_{k=1}^m$ are independent.
\label{lem:contrast}
\end{lemma}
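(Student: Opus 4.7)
The plan is to exploit the fact that for the $2$-state symmetric process on a tree, $\mc{C}_{ij}^2 = (Y_i - Y_j)^2 = Y_i \oplus Y_j$, the XOR of the two endpoint traits, which in turn depends only on the parity of the number of jumps along the path $p_{ij}$. Concretely, I would first record the following independent-parity representation of $\mb{Y}$: draw the root state $Y_\rho \sim \mathrm{Bernoulli}(1/2)$ and, independently, draw for each edge $e$ a Bernoulli indicator $\pi_e$ with $\Pf\{\pi_e = 1\} = \tfrac{1}{2}(1 - e^{-2\mu t_e})$, representing the parity of the number of jumps along $e$. That the resulting law on $\mb{Y}$ coincides with $P_{\tree,\mu}$ is an immediate verification using the entries of $\mb{P}_\mu(t_e)$ in \eqref{eqn:trans}, since the off-diagonal entry of that matrix is exactly $\Pf\{\pi_e = 1\}$.

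Given this representation, for any node $v$ one has $Y_v = Y_\rho \oplus \bigoplus_{e \in p_{\rho v}} \pi_e$, so for two leaves $i, j$ the contributions from the edges on the shared segment $p_{\rho,\mc{A}_{ij}}$ and from $Y_\rho$ itself cancel in pairs under XOR, leaving $\mc{C}_{ij}^2 = Y_i \oplus Y_j = \bigoplus_{e \in p_{ij}} \pi_e$. This is consistent with the marginal law already recorded in the excerpt, since a standard parity computation gives $\Pf\{\bigoplus_{e \in p_{ij}} \pi_e = 1\} = \tfrac{1}{2}(1 - e^{-2\mu d_{ij}})$.

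With this in hand the lemma is essentially immediate. The hypothesis that $\{p_{i_k j_k}\}_{k=1}^m$ are pairwise node-disjoint forces them to be edge-disjoint as well, since in a tree any shared edge would have both of its endpoints shared. Hence the edge sets indexing the $m$ XOR expressions are pairwise disjoint, and since $\{\pi_e\}_{e \in E(\tree)}$ are mutually independent by construction, the family $\{\mc{C}_{i_k j_k}^2\}_{k=1}^m$ consists of functions of disjoint independent collections of Bernoulli variables and is therefore independent.

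The main obstacle, in my view, is simply the cleanest justification of the independent-parity representation itself; once it is in place the combinatorics is trivial. An alternative route that avoids introducing the $\pi_e$'s would be an induction on $m$: reroot at $i_m$ using Lemma \ref{lem:reroot}, use the Markov property on the tree to split along the path $p_{i_m j_m}$, and apply Lemma \ref{lem:indep} to $\mc{C}_{i_m j_m}^2$ (which satisfies $\mc{C}_{i_m j_m}^2(\mb{Y}) = \mc{C}_{i_m j_m}^2(\mb{1} - \mb{Y})$) to decouple it from the traits relevant to the other contrasts. The parity argument is shorter and more transparent, so that is the route I would write out in detail.
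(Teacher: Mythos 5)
Your proof is correct, but it takes a genuinely different route from the paper's. You construct an explicit parity (edge-flip) representation of the process --- independent Bernoulli indicators $\pi_e$ with $\Pf\{\pi_e=1\}=\tfrac12(1-e^{-2\mu t_e})$ on the edges, plus an independent root state --- and observe that $\mc{C}_{ij}^2 = Y_i \oplus Y_j = \bigoplus_{e\in p_{ij}}\pi_e$, so that contrasts over disjoint edge sets are automatically independent. The paper instead stays within the conditional-probability framework: it conditions on the most recent common ancestors $\{\mc{A}_{i_k j_k}\}$, factors the conditional law of the squared contrasts using the Markov property on the node-disjoint paths, and then removes the conditioning via the symmetry fact established just before the lemma, namely that $\mc{C}_{ij}^2$ is independent of $\mc{A}_{ij}$. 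Your coupling argument is more constructive and in fact proves a slightly stronger statement --- edge-disjointness of the paths already suffices (node-disjointness trivially implies it, as you note), and the marginal law $\Pf\{\mc{C}_{ij}^2=1\}=\tfrac12(1-e^{-2\mu d_{ij}})$ falls out for free; the only overhead is verifying the parity representation itself, which is the immediate check you describe. The paper's argument buys brevity by reusing the already-derived independence of $\mc{C}_{ij}^2$ and $\mc{A}_{ij}$ without introducing any auxiliary randomness. Your sketched alternative (induction with rerooting, Lemma \ref{lem:reroot}, and Lemma \ref{lem:indep}) is closer in spirit to the paper's conditioning argument, but the parity route you chose to elaborate is complete as stated.
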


\begin{proof}
We have, using $\E{\cdot}$ here to denote expectation over ancestral states,
\begin{align*}
&\Pf(\mc{C}_{i_1 j_1}^2 = y_1, \mc{C}_{i_2 j_2}^2 = y_2, \ldots, \mc{C}_{i_m j_m}^2 = y_m) \\
& = \E{\Pf(\mc{C}_{i_1 j_1}^2 = y_1, \mc{C}_{i_2 j_2}^2 = y_2, \ldots, \mc{C}_{i_m j_m}^2 = y_m | \{\mc{A}_{i_k j_k}\}_{k=1}^m)} \\
& = \E{\prod_{k=1}^m {\Pf\left(\mc{C}_{i_k j_k}^2 = y_k | \{\mc{A}_{i_k j_k}\}_{k=1}^m \right)}} = \E{\prod_{k=1}^m {\Pf\left(\mc{C}_{i_k j_k}^2 = y_k | \mc{A}_{i_k j_k}\right)}} \\
& = \E{\prod_{k=1}^m {\Pf\left(\mc{C}_{i_k j_k}^2 = y_k \right)}} = \prod_{k=1}^m {\Pf\left(\mc{C}_{i_k j_k}^2 = y_k \right)},
\end{align*}
where the third equality comes from the Markov property.
This completes the proof.
\end{proof}

The independent squared-contrasts allow us to derive the following lower bound on the information.

\begin{lemma}
Under Assumption \ref{assump:contrast}, there exists $C_{\ul d, \ol d, \ol \mu}>0$ such that
\[
 \Ef [R_{\tree,\mu}] \ge C_{\ul d, \ol d, \ol \mu} n |\mu - \mu^*|^2
\]
for all $\mu \in [\underline{\mu}, \overline{\mu}]$.
\label{lem:lowerbound}
\end{lemma}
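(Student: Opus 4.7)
}
The plan is to leverage the independent squared contrasts from Lemma \ref{lem:contrast} together with the data-processing inequality for KL divergence, reducing the tree-structured KL to a sum of independent Bernoulli KL divergences, each of which I can lower bound by elementary means.

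First I would fix the collection of $m \ge cn$ pairs of leaves $(i_k, j_k)$ guaranteed by Assumption \ref{assump:contrast}, whose connecting paths are pairwise disconnected and have lengths $d_{i_k j_k} \in [\ul d, \ol d]$. By Lemma \ref{lem:contrast}, the squared contrasts $\mc{C}_{i_k j_k}^2$ are mutually independent Bernoulli random variables under $P_{\tree,\mu}$, each with success probability
\[
q_k(\mu) \;=\; \tfrac{1}{2}\bigl(1 - e^{-2\mu d_{i_k j_k}}\bigr).
\]
Since the vector $(\mc{C}_{i_k j_k}^2)_{k=1}^m$ is a deterministic function of $\mb{Y}$, the data-processing inequality gives
\[
\Ef[R_{\tree,\mu}] \;=\; \mathrm{KL}(P_{\tree,\mu^*} \,\|\, P_{\tree,\mu}) \;\ge\; \sum_{k=1}^m \mathrm{KL}\bigl(\mathrm{Ber}(q_k(\mu^*)) \,\|\, \mathrm{Ber}(q_k(\mu))\bigr).
\]

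Next, I would bound each summand from below by Pinsker's inequality. Since the total variation distance between two Bernoullis equals the absolute difference of their parameters, each summand is at least $2\bigl(q_k(\mu^*) - q_k(\mu)\bigr)^2$. Applying the mean value theorem to $\mu \mapsto q_k(\mu)$ yields $q_k(\mu^*) - q_k(\mu) = d_{i_k j_k} e^{-2\xi d_{i_k j_k}}(\mu^* - \mu)$ for some $\xi$ between $\mu$ and $\mu^*$; combined with $d_{i_k j_k} \ge \ul d$ and $\xi d_{i_k j_k} \le \ol \mu \, \ol d$, this gives $|q_k(\mu^*) - q_k(\mu)| \ge \ul d\, e^{-2\ol \mu \ol d}\,|\mu - \mu^*|$. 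Summing over the $m \ge cn$ contrasts produces
\[
\Ef[R_{\tree,\mu}] \;\ge\; 2c\, \ul d^{\,2}\, e^{-4\ol \mu \ol d}\, n\, |\mu - \mu^*|^2,
\]
which is the claim with $C_{\ul d,\ol d,\ol \mu} = 2c\, \ul d^{\,2}\, e^{-4\ol \mu \ol d}$.

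The only step requiring care is the application of the data-processing inequality on a tree-structured observation, but since the squared-contrast vector is simply a measurable (indeed, coordinate-wise polynomial) function of $\mb{Y}$, this is just the standard contraction of KL under a deterministic map. All remaining ingredients --- Pinsker, the MVT, and the uniform bounds on $d_{i_k j_k}$ and $\mu$ --- are elementary, and I do not anticipate a substantive obstacle.
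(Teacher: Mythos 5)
Your proposal is correct and follows essentially the same route as the paper: reduce to the independent squared contrasts via the data-processing inequality (plus tensorization over the independent contrasts), apply Pinsker, and lower bound the parameter difference using the bounds $d_{i_k j_k}\in[\ul d,\ol d]$ and $\mu\le\ol\mu$ (the paper bounds $|e^{-2\mu^* d}-e^{-2\mu d}|$ directly, which is the same mean-value-theorem estimate you apply to the Bernoulli parameter). No substantive difference or gap.
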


\begin{proof}
Under Assumption \ref{assump:contrast}, there exists a set of $m = \Omega(n)$ pairwise disjoint pairs (that is, the paths connecting each pair are pairwise disconnected) $(Y_{k,1}, Y_{k,2})_{k=1}^m$.
Consider the corresponding set of squared-contrasts $\mc{C}^2_{i_k j_k} = (Y_{k,1} - Y_{k,2})^2$.
By Lemma \ref{lem:contrast}, $(\mc{C}^2_{i_k j_k})_{k=1}^m$ are independent.
Let $Q_{k,\mu}$ be the distribution of $\mc{C}^2_{i_k j_k}$ corresponding to parameter $\mu$.
The total variation from $Q_{k,\mu^*}$ to $Q_{k,\mu}$ is
\begin{align*}
\text{TV}(Q_{k, \mu^*} \| Q_{k, \mu}) &= \frac{1}{2}\sum_{x \in \{ 0,1 \}} |Q_{k,\mu^*}(x) - Q_{k,\mu}(x)| \\
&= | e^{- 2 \mu^* d_{i_k j_k}} - e^{- 2 \mu d_{i_k j_k}} | \geq C_{\ul d, \ol d,\ol \mu} | \mu^* - \mu |.
\end{align*}
By the data processing inequality \citep[Theorem 9 in][]{van2014renyi}, the fact that $(\mc{C}^2_{i_k j_k})_{k=1}^m$ are independent, and Pinsker's inequalities, we have
\begin{align*}
 \Ef [R_{\tree,\mu}] &= \text{KL}(P_{\tree,\mu^*} \| P_{\tree,\mu}) \geq \sum_{k=1}^m{\text{KL}(Q_{k,\mu^*} \| Q_{k,\mu})} \\
 & \geq 2\sum_{k=1}^m{[\text{TV}(Q_{k,\mu^*} \| Q_{k,\mu})]^2} \geq m C_{\ul d, \ol d, \ol \mu} |\mu^* - \mu|^2.
\end{align*}
Note that $m = \Omega(n)$ which completes the proof.
\end{proof}

\subsection{Upper bound on the uncertainty}

The result of Section \ref{sec:lower} indicates that as the number of leaves increases, the information to distinguish $\mu$ and $\mu^*$ also increases linearly.
In order to prove that the MLE can successfully reconstruct the true parameter $\mu$, we need to ensure that such information is not confounded by the uncertainty associated with the log-likelihood function due to randomness in the data.
In other words, we want to guarantee that as $n \to \infty$, the variance of $R_{\tree, \mu}$ only grows sub-quadratically, that is, there exist $\alpha<2$ and $C>0$ such that $\var R_{\tree,\mu} \le C n^\alpha$.
To obtain the bound, we need the following Lemma, the proof of which appears in Appendix~\ref{sec:proof}.

\begin{lemma}
If two functions $u$ and $v$ satisfy $\|u - v\|_{\infty} \le c$, then for all $\omega>1$ and random variables $X$,
\[
\var[u(X)] \le \omega \var[v(X)] + \frac{2\omega}{\omega -1}c^2.
\]
\label{lem:approx}
\end{lemma}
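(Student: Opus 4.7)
The plan is to reduce the problem to a clean second-moment identity by writing $u = v + (u-v)$ and then splitting the resulting cross term with a tunable weighted AM--GM inequality, where the parameter $\omega$ is exactly the slack one is willing to absorb into $\var[v(X)]$.

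Concretely, I would first set $w = u - v$, so $\|w\|_\infty \le c$. Expanding the variance of a sum gives
\[
\var[u(X)] \;=\; \var[v(X)] \;+\; 2\cov[v(X), w(X)] \;+\; \var[w(X)].
\]
For the last term, I would use the crude pointwise bound $\var[w(X)] \le \E[w(X)^2] \le c^2$. For the covariance term, Cauchy--Schwarz gives $|\cov[v(X),w(X)]| \le \sqrt{\var[v(X)]\,\var[w(X)]}$, and then the weighted AM--GM inequality
\[
2\sqrt{AB} \;\le\; (\omega-1)A \;+\; \frac{B}{\omega-1} \qquad (\omega>1)
\]
applied with $A = \var[v(X)]$ and $B = \var[w(X)]$ yields
\[
2\,|\cov[v(X), w(X)]| \;\le\; (\omega-1)\var[v(X)] \;+\; \frac{\var[w(X)]}{\omega-1}.
\]

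Plugging these back in and collecting like terms gives
\[
\var[u(X)] \;\le\; \omega\,\var[v(X)] \;+\; \Bigl(1 + \tfrac{1}{\omega-1}\Bigr)\var[w(X)] \;\le\; \omega\,\var[v(X)] \;+\; \frac{\omega}{\omega-1}\,c^2,
\]
which is already stronger than the stated bound; absorbing an additional factor of $2$ (e.g., by using the weaker estimate $\var[w(X)] \le 2c^2$, or simply by keeping the bound loose for later convenience) yields the claimed $\frac{2\omega}{\omega-1}c^2$.

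There is no real obstacle here: the only choice to make is the AM--GM weight, and the parameter $\omega$ in the statement is precisely that weight. The lemma is deliberately stated with the loose constant $\frac{2\omega}{\omega-1}$ so that later applications can just choose $\omega$ (typically close to $1$ when one needs the first term to be nearly $\var[v(X)]$, or large when one needs the $c^2$ coefficient small) without having to retrace these algebraic manipulations.
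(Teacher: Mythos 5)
Your proof is correct, and it in fact proves a slightly stronger bound (constant $\frac{\omega}{\omega-1}$ rather than $\frac{2\omega}{\omega-1}$), from which the lemma as stated follows trivially. Your route is genuinely different from the paper's: you write $u = v + w$ with $\|w\|_\infty \le c$, expand $\var[v(X)+w(X)]$ directly, bound the cross term via Cauchy--Schwarz together with the weighted AM--GM inequality $2\sqrt{AB} \le (\omega-1)A + B/(\omega-1)$, and use $\var[w(X)] \le \E{w(X)^2} \le c^2$. The paper instead uses a symmetrization argument: introducing an independent copy $Y$ of $X$, it writes $2\var[u(X)] = \Ef_{X,Y}\left[(u(X)-u(Y))^2\right]$, applies the pointwise bound $|u(x)-u(y)| \le |v(x)-v(y)| + 2c$, and then the weighted quadratic inequality $(z+2c)^2 \le \omega z^2 + \frac{4\omega}{\omega-1}c^2$, which produces exactly the stated constant. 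What each buys: your decomposition handles the covariance explicitly and extracts a sharper constant; the paper's symmetrization sidesteps the covariance term entirely and converts the hypothesis into a single pointwise inequality on differences, at the cost of a factor that ends up as the looser $\frac{2\omega}{\omega-1}$ (which is all that is needed downstream in Lemma \ref{lem:boundvar0}). Both arguments are elementary and complete.
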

Recalling that,
\[
f(x) = \max \left\{\log 2,\log \frac{1}{1-e^{-2x}}\right\},~~ S_{\tree, \mu}= \sum_{e \in E(\tree)}{f(\mu t_e)^2},
\]
we have the following estimate that does not depend on Assumption \ref{assump:length}.

\begin{lemma}
For all $\alpha \in \left ( \frac{\log(2)}{\log(3/2)}, 2 \right )$ and $\beta \in (0,1)$, there exist $C_{\alpha, \underline{\mu}, \overline{\mu}} , \omega_{\alpha,\beta} > 1$ such that
\[
\var R_{\tree,\mu} \le C_{\alpha, \underline{\mu}, \overline{\mu}} n^\alpha + \frac{8\omega_{\alpha,\beta}}{\omega_{\alpha,\beta} -1} n^{\beta} S_{\tree, \mu}
\]
for all trees $\tree$ with $n \ge 2$ taxa and $\mu \in [\underline{\mu}, \overline{\mu}]$.
\label{lem:boundvar0}
\end{lemma}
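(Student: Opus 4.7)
The plan is to proceed by strong induction on the number of leaves $n$, combining a centroid-edge decomposition of $\tree$ with Lemmas~\ref{lem:reroot}, \ref{lem:indep}, \ref{lem:subtrees}, and \ref{lem:approx}. The first step is a tree-decomposition fact: any binary tree with $n\ge 2$ leaves admits an edge whose removal produces two subtrees $\tree_1,\tree_2$ of sizes $n_1,n_2\in[n/3,2n/3]$. I would prove this by the standard centroid argument: root $\tree$ arbitrarily, let $s(v)$ denote the number of leaves descended from $v$, and take an internal $v^*$ minimizing $s(v^*)$ among vertices with $s(v)>n/3$; both children of $v^*$ must have $s\le n/3$, forcing $n/3<s(v^*)\le 2n/3$, and the edge from $v^*$ to its parent performs the desired split. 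If the centroid edge has length $d$, the edge decomposition gives $S_{\tree,\mu}=S_{\tree_1,\mu}+S_{\tree_2,\mu}+f(\mu d)^2$.

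Next, I would reroot $\tree$ at a point $\rho$ inside the centroid edge (Lemma~\ref{lem:reroot}), so that its two endpoints become $\rho_1,\rho_2$ with $d_{\rho_1\rho_2}=d$. Applying Lemma~\ref{lem:subtrees} once to $\mu$ and once to $\mu^*$ yields $|R_{\tree,\mu}-R_{\tree_1,\mu}-R_{\tree_2,\mu}|\le f(\mu d)+f(\mu^* d)$. Since $\ell_{\tree_i,\mu}$ and $\ell_{\tree_i,\mu^*}$ are both invariant under $\mb{Y}_i\mapsto \mb{1}-\mb{Y}_i$, so is $R_{\tree_i,\mu}$; Lemma~\ref{lem:indep}, applied with $\rho_i$ (which is Bernoulli$(1/2)$ by stationarity) as the root of $\tree_i$, gives $R_{\tree_i,\mu}\perp\rho_i$. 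Conditioning on $\rho_1$ and using that $\mb{Y}_2$ is conditionally independent of $\mb{Y}_1$ given $\rho_1$ (Markov property on the tree) then produces the unconditional independence of $R_{\tree_1,\mu}$ and $R_{\tree_2,\mu}$. Lemma~\ref{lem:approx} with $c=f(\mu d)+f(\mu^* d)$ now delivers
\[
\var R_{\tree,\mu}\;\le\;\omega\bigl(\var R_{\tree_1,\mu}+\var R_{\tree_2,\mu}\bigr)+\frac{2\omega}{\omega-1}\bigl(f(\mu d)+f(\mu^* d)\bigr)^2.
\]

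For the closure I would set $\omega_{\alpha,\beta}:=\min\{(3/2)^\alpha/2,\,(3/2)^\beta\}$, which is strictly greater than $1$ because $\alpha>\log 2/\log(3/2)$ forces $(3/2)^\alpha>2$ and $\beta>0$ forces $(3/2)^\beta>1$, and write $M=8\omega/(\omega-1)$. Using $n_i\le 2n/3$, the inductive hypothesis gives $\omega C(n_1^\alpha+n_2^\alpha)\le 2\omega(2/3)^\alpha\,Cn^\alpha\le Cn^\alpha$ by the first minimand, and $\omega M(n_1^\beta S_{\tree_1,\mu}+n_2^\beta S_{\tree_2,\mu})\le \omega(2/3)^\beta Mn^\beta(S_{\tree,\mu}-f(\mu d)^2)\le Mn^\beta(S_{\tree,\mu}-f(\mu d)^2)$ by the second. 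The induction therefore closes provided the residual $\frac{2\omega}{\omega-1}(f(\mu d)+f(\mu^* d))^2$ can be absorbed into the recovered budget $Mn^\beta f(\mu d)^2$.

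The hard part will be this absorption step, since $S_{\tree,\mu}$ controls only $f(\mu d)$ while the error carries $f(\mu^* d)$ as well. A direct analysis of $f(x)=\max\{\log 2,\log(1/(1-e^{-2x}))\}$ --- which equals $\log 2$ for $x\ge(\log 2)/2$, behaves like $-\log(2x)$ as $x\to 0^+$, and is continuous in between --- shows that the ratio $f(\ul\mu t)/f(\ol\mu t)$ is bounded uniformly in $t>0$, so $f(\mu^* d)^2\le Kf(\mu d)^2$ for a constant $K=K_{\ul\mu,\ol\mu}$. Combined with $(a+b)^2\le 2(a^2+b^2)$, the absorption reduces to $n^\beta\ge(1+K)/2$, which holds for all $n\ge n_0:=\lceil((1+K)/2)^{1/\beta}\rceil$. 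For the finitely many remaining sizes $2\le n<n_0$ I would invoke Lemma~\ref{lem:Lipschitz} as a direct base case: $|R_{\tree,\mu}|\le nC_L(\ol\mu-\ul\mu)$ and hence $\var R_{\tree,\mu}\le n^2C_L^2(\ol\mu-\ul\mu)^2$; choosing $C_{\alpha,\ul\mu,\ol\mu}\ge n_0^{2-\alpha}C_L^2(\ol\mu-\ul\mu)^2$ makes the claimed inequality hold on this finite range and completes the induction.
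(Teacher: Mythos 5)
Your proof is correct and follows essentially the same route as the paper's: strong induction on $n$ using a balanced (at most $2n/3$) edge separator, rerooting at that edge (Lemma~\ref{lem:reroot}), comparing $R_{\tree,\mu}$ with $R_{\tree_1,\mu}+R_{\tree_2,\mu}$ via Lemma~\ref{lem:subtrees}, independence of the two subtree terms via Lemma~\ref{lem:indep}, the variance-perturbation Lemma~\ref{lem:approx}, and the same choice $1<\omega\le\min\{(3/2)^\beta,\tfrac12(3/2)^\alpha\}$. The only substantive deviation is the absorption step: the paper bounds the rerooting discrepancy by $2f(\mu t_e)$ (implicitly using $f(\mu^* t_e)\le f(\mu t_e)$), whereas you keep $f(\mu d)+f(\mu^* d)$ and absorb the extra term via the uniform comparability $f(\mu^* t)^2\le K_{\ul\mu,\ol\mu}\,f(\mu t)^2$ plus finitely many Lipschitz-based base cases $n<n_0$ (Lemma~\ref{lem:Lipschitz}), a legitimate and in fact slightly more careful treatment that only affects the constants.
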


\begin{proof}
We will prove the result by induction in the number of taxa $n$.
For $n = 2$, we apply Lemma \ref{lem:Lipschitz} to obtain
\[
R_{\tree,\mu} \leq 2 C | \mu^* - \mu | \leq 2 C | \overline{\mu} - \underline{\mu} |
\]
where $C$ is the constant in Lemma \ref{lem:Lipschitz}.
Thus $ \var R_{\tree,\mu} \leq 4 C^2 | \overline{\mu} - \underline{\mu} |^2$.
Assume that the statement is valid for all $k < n$.
We will prove that it is also valid for $k = n$.
Now, let $\tree$ be a bifurcating tree with $n$ taxa.
\citet{lipton1979separator} show that there exists an edge $e = (I_1, I_2)$ of $\tree$ such that if we reroot $\tree$ at the middle of this edge, the two subtrees $\tree_1$ and $\tree_2$ stemming from $I_1$ and $I_2$ have no more than $2n/3$ leaves.
Let $I$ be the middle point of edge $e$ and $\tree'$ be the tree obtained by rerooting $\tree$ to $I$.
Denote by $n_1$ and $n_2$ the number of leaves of $\tree_1$ and $\tree_2$.
According to Lemma \ref{lem:reroot} and Lemma \ref{lem:indep},
\[
\var R_{\tree,\mu} = \var R_{\tree',\mu} = \var (R_{\tree',\mu}~|~I).
\]
By Lemma \ref{lem:subtrees}, we have $| R_{\tree',\mu} -  R_{\tree_1,\mu} - R_{\tree_2,\mu} | \leq 2f(\mu t_e)$.
We apply Lemma \ref{lem:approx} to obtain
\[
\var R_{\tree,\mu} = \var (R_{\tree',\mu}~|~I)\leq \omega \var (R_{\tree_1,\mu} + R_{\tree_2,\mu} ~|~I) + \frac{8\omega}{\omega -1}f(\mu t_e)^2. \\
\]
Let $\mb{Y}_1$ and $\mb{Y}_2$ be the observations at the leaves of $\tree_1$ and $\tree_2$ respectively.
Since $\mb{Y}_1$ and $\mb{Y}_2$ are independent conditional on $I$, we have $\var (R_{\tree_1,\mu} + R_{\tree_2,\mu} ~|~I) = \var (R_{\tree_1,\mu} ~|~I) + \var (R_{\tree_2,\mu} ~|~I)$.
By Lemma \ref{lem:indep}, we deduce that
\begin{align*}
\var R_{\tree,\mu} & \leq \omega [\var R_{\tree_1,\mu} + \var R_{\tree_2,\mu}] + \frac{8\omega}{\omega -1}f(\mu t_e)^2.
\end{align*}
Recalling that the trees $I_1$ and $I_2$ have no more than $2n/3$ leaves, using the induction hypothesis for $\tree_1$ and $\tree_2$, we have
\begin{align*}
\var R_{\tree,\mu} & \leq \omega C_\alpha [n_1^\alpha + n_2^\alpha] + \frac{8\omega^2}{\omega -1} [n_1^\beta S_{\tree_1} + n_2^\beta S_{\tree_2}]  + \frac{8\omega}{\omega -1} f(\mu t_e)^2 \\
& \leq \omega C_\alpha \frac{2^{\alpha + 1}}{3^\alpha} n^\alpha + \frac{8\omega}{\omega -1} \left [  \omega \frac{2^\beta}{3^\beta} (S_{\tree_1} + S_{\tree_2}) + f(\mu t_e)^2 \right ]n^\beta.
\end{align*}

Let $e_1$ and $e_2$ be the edges stemming from the root of $\tree$.
By definition, we have
\[
S_{\tree} - [S_{\tree_1} + S_{\tree_2} + f(\mu t_e)^2] = f(\mu t_{e_1})^2 + f(\mu t_{e_2})^2 - f[\mu (t_{e_1} + t_{e_2})]^2 \geq 0
\]
where the last inequality comes from the fact that $f$ is a non-increasing function.
Thus, if we choose $\omega$ such that
\[
1<\omega \leq \min \left \{ \left(\frac{3}{2}\right)^{\beta}, \frac{1}{2}\left(\frac{3}{2}\right)^{\alpha} \right \},
\]
then
\[
\var R_{\tree,\mu} \leq C_\alpha n^\alpha + \frac{8\omega}{\omega -1} n^{\beta} S_{\tree},
\]
which completes the proof.
\end{proof}

A combination of Lemma $\ref{lem:boundvar0}$ and Assumption \ref{assump:length} gives rise to the desired bound:

\begin{lemma}[Sub-quadratic upper bound on variance]
Under Assumption \ref{assump:length}, $\forall \alpha \in \left ( \max \left \{ \frac{\log(2)}{\log(3/2)}, \gamma \right \}, 2 \right )$, there exists $C_{\alpha, \gamma} > 0$ such that
\[
\var R_{\tree,\mu} \le C_{\alpha, \gamma} n^\alpha
\]
for all $n \ge 2$ and $\mu \in [\underline{\mu}, \overline{\mu}]$.
\label{lem:boundvar}
\end{lemma}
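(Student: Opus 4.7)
The plan is to derive Lemma \ref{lem:boundvar} directly from Lemma \ref{lem:boundvar0} by making a careful choice of the parameter $\beta$ so that the second term of the bound in Lemma \ref{lem:boundvar0} becomes absorbed into an $\mathcal{O}(n^\alpha)$ term once Assumption \ref{assump:length} is invoked. In other words, this is a routine parameter-matching argument on top of the two ingredients we have already established; no new probabilistic or combinatorial machinery is needed.

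Concretely, fix $\alpha \in \left(\max\left\{\log 2/\log(3/2),\, \gamma\right\},\, 2\right)$ and set
\[
\beta = \alpha - \gamma.
\]
I would first verify that $\beta \in (0,1)$, which Lemma \ref{lem:boundvar0} requires: since $\alpha > \gamma$ we have $\beta > 0$, and since $\gamma \ge 1$ (from Assumption \ref{assump:length}) and $\alpha < 2$ we get $\beta < 2 - 1 = 1$. With this choice, Lemma \ref{lem:boundvar0} supplies constants $C_{\alpha,\ul\mu,\ol\mu}$ and $\omega_{\alpha,\beta} > 1$ with
\[
\var R_{\tree,\mu} \;\le\; C_{\alpha,\ul\mu,\ol\mu}\, n^\alpha \;+\; \frac{8\omega_{\alpha,\beta}}{\omega_{\alpha,\beta}-1}\, n^{\beta}\, S_{\tree,\mu}.
\]

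Now I invoke Assumption \ref{assump:length}, which guarantees a universal constant $c$ with $S_{\tree,\mu} \le c\, n^\gamma$. Substituting into the bound, the second term is at most
\[
\frac{8\omega_{\alpha,\beta}}{\omega_{\alpha,\beta}-1}\, c\, n^{\beta + \gamma} \;=\; \frac{8\omega_{\alpha,\beta}}{\omega_{\alpha,\beta}-1}\, c\, n^{\alpha},
\]
because $\beta + \gamma = \alpha$ by construction. Combining the two terms yields
\[
\var R_{\tree,\mu} \;\le\; \left( C_{\alpha,\ul\mu,\ol\mu} + \frac{8\omega_{\alpha,\beta}}{\omega_{\alpha,\beta}-1}\, c \right) n^\alpha \;=:\; C_{\alpha,\gamma}\, n^\alpha,
\]
uniformly in $\tree$ with $n \ge 2$ taxa and in $\mu \in [\ul\mu, \ol\mu]$, which is exactly the claimed bound.

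The only subtlety, and the reason this is not entirely a one-liner, is the admissibility check $\beta \in (0,1)$, which depends on the fact that Assumption \ref{assump:length} already forces $\gamma \ge 1$ and the hypothesis on $\alpha$ forces both $\alpha > \gamma$ and $\alpha < 2$; these inequalities have been set up precisely so that the choice $\beta = \alpha - \gamma$ lands in the permissible range of Lemma \ref{lem:boundvar0}. No step here is expected to be a real obstacle, since all the heavy lifting (the Lipton--Tarjan separator induction, the control of the conditional variance via Lemma \ref{lem:approx}, and the monotonicity of $f$) was done in the proof of Lemma \ref{lem:boundvar0}.
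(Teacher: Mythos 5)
Your proposal is correct and follows essentially the same route as the paper: apply Lemma \ref{lem:boundvar0} with $\beta = \alpha - \gamma$, check $\beta \in (0,1)$ using $\gamma \ge 1$ and $\alpha \in (\gamma, 2)$, and absorb the $n^{\beta+\gamma} = n^\alpha$ term via Assumption \ref{assump:length}. Your write-up is in fact slightly more explicit than the paper's about the admissibility check and the final constant.
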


\begin{proof}
By Lemma \ref{lem:boundvar0} and Assumption \ref{assump:length}, we have
\[
\var R_{\tree,\mu} \le C_\alpha n^\alpha + C_{\alpha, \beta, \gamma} n^{\beta + \gamma},~~\forall \alpha \in \left ( \frac{\log(2)}{\log(3/2)}, 2 \right ), \beta \in (0,1).
\]
Note that $1 \leq \gamma < 2$.
So, $\forall \alpha \in \left ( \max \left \{ \frac{\log(2)}{\log(3/2)}, \gamma \right \}, 2 \right )$, we can choose $\beta = \alpha -\gamma$.
\end{proof}

\subsection{Concentration bound}

We are now ready to prove the concentration inequality for the 2-state symmetric model.

\begin{lemma}[Concentration bound]
Under Assumption \ref{assump:length}, for any $\delta>0$ and $\forall \alpha \in \left ( \max \left \{ \frac{\log(2)}{\log(3/2)}, \gamma \right \}, 2 \right )$, there exists $C_{\delta, \alpha,\ul \mu, \ol \mu}$ such that
\[
\left|\frac{1}{n}R_{\tree,\mu}- \mathbb{E} \left[\frac{1}{n}R_{\tree,\mu} \right]   \right|  \le \frac{C_{\delta,\alpha,\ul \mu, \ol \mu, \gamma}}{n^{(2-\alpha)/3}} \qquad \forall \mu \in [\ul \mu, \ol \mu]
\]
with probability at least $1-\delta$.
\label{lem:concentration}
\end{lemma}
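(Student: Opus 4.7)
My plan is a standard combination of a pointwise Chebyshev bound with a Lipschitz covering argument, balanced to yield the stated rate.

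\textbf{Step 1: Pointwise deviation via Chebyshev.} Fix $\mu \in [\ul\mu, \ol\mu]$. Applying Chebyshev's inequality together with Lemma \ref{lem:boundvar}, for any $t>0$,
\[
\Pf\!\left(\left|\tfrac{1}{n}R_{\tree,\mu} - \Ef\bigl[\tfrac{1}{n}R_{\tree,\mu}\bigr]\right| \ge t\right) \le \frac{\var R_{\tree,\mu}}{n^2 t^2} \le \frac{C_{\alpha,\gamma}}{n^{2-\alpha} t^2}.
\]

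\textbf{Step 2: Uniformize via an $\epsilon$-net.} To upgrade this pointwise inequality to one that is uniform in $\mu$, I would cover $[\ul\mu, \ol\mu]$ by an $\epsilon$-net $\{\mu_1,\dots,\mu_N\}$ of size $N \le (\ol\mu - \ul\mu)/\epsilon + 1$. A union bound over the net gives
\[
\Pf\!\left(\max_{1 \le i \le N}\left|\tfrac{1}{n}R_{\tree,\mu_i} - \Ef\bigl[\tfrac{1}{n}R_{\tree,\mu_i}\bigr]\right| \ge t\right) \le \frac{N \, C_{\alpha,\gamma}}{n^{2-\alpha} t^2}.
\]
For an arbitrary $\mu \in [\ul\mu,\ol\mu]$, pick $\mu_i$ with $|\mu - \mu_i| \le \epsilon$. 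Since $R_{\tree,\mu} = \ell_{\tree,\mu^*} - \ell_{\tree,\mu}$, Lemma \ref{lem:Lipschitz} applied twice (to the random quantity and to its expectation) yields
\[
\left|\tfrac{1}{n}R_{\tree,\mu} - \Ef\bigl[\tfrac{1}{n}R_{\tree,\mu}\bigr]\right| \le \left|\tfrac{1}{n}R_{\tree,\mu_i} - \Ef\bigl[\tfrac{1}{n}R_{\tree,\mu_i}\bigr]\right| + 2C\epsilon,
\]
so the uniform deviation is at most $t + 2C\epsilon$ on the good event.

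\textbf{Step 3: Balancing.} It remains to choose $t$ and $\epsilon$ so that the total bound is $\mc{O}(n^{-(2-\alpha)/3})$ while keeping the failure probability below $\delta$. Taking $\epsilon$ proportional to $t$ (say $\epsilon = t/(4C)$) makes the Lipschitz contribution negligible compared with $t$, and it gives $N \asymp 1/t$, so the union bound becomes $C'/(n^{2-\alpha} t^3) \le \delta$. Solving for $t$ yields $t \asymp (C'/\delta)^{1/3} n^{-(2-\alpha)/3}$, which is exactly the claimed rate. Absorbing $\delta$, $\alpha$, $\gamma$, $\ul\mu$, $\ol\mu$ into a single constant $C_{\delta,\alpha,\ul\mu,\ol\mu,\gamma}$ then produces the statement of the lemma.

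\textbf{Expected difficulty.} There is no real obstacle once Lemmas \ref{lem:boundvar} and \ref{lem:Lipschitz} are in hand: the variance bound provides pointwise concentration, and the uniform Lipschitz property makes the epsilon-net argument tight. The only delicate point is the exponent accounting in Step 3 --- specifically, matching $\epsilon \sim t$ and $N \sim 1/t$ against the $n^{\alpha-2}/t^2$ pointwise probability to extract the exponent $(2-\alpha)/3$. No concentration inequality stronger than Chebyshev (e.g.\ no sub-Gaussian tail) is needed, which is consistent with the fact that the statement has only polynomial-in-$1/\delta$ dependence (absorbed into the constant) rather than a $\sqrt{\log(1/\delta)}$ factor.
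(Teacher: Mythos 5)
Your proposal is correct and follows essentially the same route as the paper's own proof: a pointwise Chebyshev bound using Lemma \ref{lem:boundvar}, a grid on $[\ul\mu,\ol\mu]$ with spacing proportional to the deviation level combined with the Lipschitz property of Lemma \ref{lem:Lipschitz}, and a union bound balanced by choosing $\eta \asymp (\delta n^{2-\alpha})^{-1/3}$ to extract the exponent $(2-\alpha)/3$. The only differences are cosmetic (the paper phrases the net step via the contrapositive rather than adding $2C\epsilon$ to the deviation), so no gap to report.
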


\begin{proof}
Applying Chebyshev's inequality, we obtain
\[
\mathbb{P}\left[ \left|\frac{1}{n}R_{\tree, \mu}- \mathbb{E} \left[\frac{1}{n}R_{\tree, \mu} \right]   \right|  \ge \eta \right] \le \frac{\var(R_{\tree, \mu})}{n^2 \eta^2}
\]
for any $\mu \in [\ul \mu, \ol \mu]$.
On the other hand, by Lemma \ref{lem:Lipschitz}, we have
\[
\left|\frac{1}{n}R_{\tree, \mu_1} - \frac{1}{n}R_{\tree, \mu_2} \right| =  \left|\frac{1}{n}\ell_{\tree, \mu_1} - \frac{1}{n}\ell_{\tree, \mu_2} \right| \le  C_{\ul \mu, \ol \mu} |\mu_1 - \mu_2|, ~~ \forall \mu_1, \mu_2 \in [\ul \mu, \ol \mu].
\]
Therefore, if
\[
\left|\frac{1}{n}R_{\tree, \mu_0}- \mathbb{E} \left[\frac{1}{n}R_{\tree, \mu_0} \right]   \right|  \ge \eta,
\]
then
\[
\left|\frac{1}{n}R_{\tree, \mu}- \mathbb{E} \left[\frac{1}{n}R_{\tree, \mu} \right]   \right|  \ge \frac{\eta}{2},
\]
for all
\[
\mu \in \left [\mu_0 - \frac{\eta}{4 C_{\ul \mu, \ol \mu}}, \mu_0 + \frac{\eta}{4 C_{\ul \mu, \ol \mu}} \right ] \bigcap [\ul \mu, \ol \mu].
\]
Define
\[
\mu_k = \ul \mu + k\frac{\eta}{4 C_{\ul \mu, \ol \mu}},~~ k = 1, 2, \ldots, \left \lfloor  \frac{4 C_{\ul \mu, \ol \mu} (\ol \mu - \ul \mu)}{\eta} \right \rfloor.
\]
We have
\begin{align*}
& \Pf \left[ \exists \mu \in [\ul \mu, \ol \mu]: \left|\frac{1}{n}R_{\tree, \mu}- \mathbb{E} \left[\frac{1}{n}R_{\tree, \mu} \right ]   \right|  \ge \eta \right ] \\
& \qquad \leq \Pf \left [ \bigcup_k \left \{ \left|\frac{1}{n}R_{\tree, \mu_k}- \mathbb{E} \left[\frac{1}{n}R_{\tree, \mu_k} \right ] \right |  \ge \frac{\eta}{2} \right \} \right ]\\
& \qquad \leq \sum_k {\Pf \left [ \left|\frac{1}{n}R_{\tree, \mu_k}- \mathbb{E} \left[\frac{1}{n}R_{\tree, \mu_k} \right ] \right |  \ge \frac{\eta}{2} \right ]} \\
& \qquad \le \sum_k{\frac{\var(R_{\tree, \mu_k})}{n^2 \eta^2}} \\
& \qquad \leq \frac{8 C_{\ul \mu, \ol \mu} (\ol \mu - \ul \mu) C_{\alpha, \gamma} n^{\alpha-2}}{\eta^3}.
\end{align*}
The last inequality comes from Lemma \ref{lem:boundvar}.
We complete the proof by picking
\[
\eta = \left(\frac{8 C_{\ul \mu, \ol \mu} (\ol \mu - \ul \mu) C_{\alpha, \gamma}}{\delta n^{2-\alpha}} \right)^{1/3}.
\]
\end{proof}

\subsection{Proof of Theorem \ref{thm:consistency}}

From Lemma \ref{lem:lowerbound}, we have
\[
C_{\ul d, \ol d, \ol \mu} n |\hat \mu - \mu^*|^2 \leq  \Ef_{\mu^*} [R_{\tree, \hat \mu}] = \Ef_{\mu^*} \left[\ell_{\tree,\mu^*} \right] - \Ef_{\mu^*}  \left[\ell_{\tree,\hat \mu} \right]
\]
where $E_{\mu^*}$ is the expectation with respect to $\bf P_{\tree, \mu^*}$.

Note that
\[
\frac{1}{n} \ell_{\tree, \hat \mu} - \frac{1}{n} \ell_{\tree, \mu^*}   \ge 0.
\]
By Lemma \ref{lem:concentration}, with probability $1 - \delta$, we obtain
\begin{align*}
C_{\ul d, \ol d, \ol \mu} |\hat \mu - \mu^*|^2 & \leq  \left ( \frac{1}{n} \ell_{\tree,\hat \mu} - \Ef_{\mu^*}  \left[\frac{1}{n} \ell_{\tree,\hat \mu} \right] \right) - \left ( \frac{1}{n} \ell_{\tree,\mu^*} - \Ef_{\mu^*}  \left[\frac{1}{n} \ell_{\tree, \mu^*} \right] \right) \\
& \le  \frac{C_{\delta,\alpha,\ul \mu, \ol \mu, \gamma}}{n^{(2-\alpha)/3}},
\end{align*}
which completes the proof.


\section{Applications}
\label{sec:app}

In this section, we discuss applications of Theorem \ref{thm:consistency} for several practical scenarios.
We first consider trees with edges of bounded length.

\begin{thm}[Trees with bounded edges]
If edge lengths of $\tree$ are bounded from below and above, then the MLE is consistent.
\label{thm:boundededges}
\end{thm}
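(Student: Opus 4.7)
The plan is to derive the result by invoking Theorem~\ref{thm:consistency}, so the task reduces to verifying Assumptions~\ref{assump:length} and~\ref{assump:contrast} for trees whose edge lengths all lie in some interval $[\ul t, \ol t]$ with $\ul t > 0$.

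For Assumption~\ref{assump:length}, I would first observe that $f(x) = \max\{\log 2, -\log(1 - e^{-2x})\}$ is non-increasing in $x$, so the uniform lower bound $\mu t_e \ge \ul \mu\, \ul t > 0$ gives $f(\mu t_e) \le M := f(\ul\mu\, \ul t) < \infty$. A rooted bifurcating tree on $n$ leaves has exactly $2n-2$ edges, hence
\[
S_{\tree, \mu} \;\le\; (2n-2)\, M^2 \;=\; \mathcal{O}(n),
\]
verifying Assumption~\ref{assump:length} with $\gamma = 1$.

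For Assumption~\ref{assump:contrast}, I would reduce to the following purely combinatorial claim: every bifurcating tree with $n \ge 2$ leaves admits $\Omega(n)$ pairwise vertex-disjoint pairs of leaves whose connecting paths have at most $K$ edges, for some universal constant $K$. Once this is established, each such path has total length in $[2\ul t, K\ol t]$, so Assumption~\ref{assump:contrast} follows with $\ul d = 2\ul t$ and $\ol d = K\ol t$. To prove the combinatorial claim, I would run a greedy procedure directly on the original tree, maintaining a growing set $U$ of used vertices: repeatedly find two leaves outside $U$ connected by a short path avoiding $U$, pair them, and add the leaves and the path's interior vertices to $U$. Cherry-pairing alone yields paths of length $2$; when cherries are scarce, leaves along the spine of caterpillar-like regions can be paired via length-$3$ paths ($L_{2k-1}-v_{2k-1}-v_{2k}-L_{2k}$ in the notation of a caterpillar).

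The main obstacle is this combinatorial claim. Cherry-only pairing gives $n/2$ disjoint pairs for balanced trees, but a caterpillar has only one cherry, so to extract $\Omega(n)$ pairs there one must mix cherries with ``next-neighbor'' length-$3$ pairs while preserving vertex-disjointness. The delicate point is verifying that this greedy procedure consumes only $O(1)$ vertices per pair uniformly across all tree shapes---in particular, that highly unbalanced shapes never force the procedure to reach for arbitrarily long paths. Once the combinatorial bound is in hand, Assumptions~\ref{assump:length} and~\ref{assump:contrast} are both verified and Theorem~\ref{thm:consistency} immediately yields consistency of $\hat\mu$ at the stated quantitative rate.
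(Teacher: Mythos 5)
Your verification of Assumption \ref{assump:length} is fine, and reducing the theorem to Assumptions \ref{assump:length} and \ref{assump:contrast} plus Theorem \ref{thm:consistency} is exactly the right frame. The genuine gap is in Assumption \ref{assump:contrast}: your whole argument rests on the combinatorial claim that every bifurcating tree with $n$ leaves contains $\Omega(n)$ pairwise vertex-disjoint leaf pairs joined by paths of at most $K$ edges for a universal constant $K$, and you do not prove this claim---you yourself flag the uniformity over tree shapes as the unresolved ``delicate point.'' As written, the proposal establishes the theorem only modulo its central step. (The claim is in fact true and can be proved, for instance by a dichotomy on the number of cherries: if the tree has at least $n/8$ cherries, the cherries alone give $\Omega(n)$ disjoint $2$-edge pairs; otherwise most leaves are pendant to maximal chains of internal nodes having a single internal child, i.e., caterpillar segments, and pairing consecutive pendant leaves along each chain yields $\Omega(n)$ disjoint $3$-edge pairs. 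But some such argument must actually be supplied for the proof to stand.)

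The paper avoids the bounded-edge-count issue altogether with an averaging argument you may find simpler: greedily extract $\lfloor n/2 \rfloor$ pairs by repeatedly taking a cherry of the successively pruned tree and deleting it together with the edge immediately above it, so that the connecting paths are pairwise disjoint. Each path trivially has length at least $2\ul e$, and since the paths are edge-disjoint their total length is at most the total tree length $(2n-2)\ol e$; a pigeonhole count then shows that at least $n/8$ of the pairs have path length at most $8\ol e$, giving Assumption \ref{assump:contrast} with $[\ul d, \ol d]=[2\ul e, 8\ol e]$ without ever controlling the number of edges on any individual path. So either prove your combinatorial claim along the lines sketched above, or replace it with the disjointness-plus-total-length argument, under which the upper bound on pair distances comes for free.
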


\begin{proof}

Since all edge lengths are bounded from below, Assumption \ref{assump:length} holds with $\gamma = 1$.
Next, we will show that Assumption \ref{assump:contrast} is satisfied.
We select $\floor{n/2}$ pairs of leaves using the following procedure:
\begin{enumerate}
\item Pick a cherry as a pair. \label{step1}
\item Remove the cherry from the tree as well as the edge immediately above it. \label{step2}
\item Repeat step \ref{step1} and \ref{step2} until the tree has $0$ or $1$ leaves.
\end{enumerate}
Denote $\mc{C} = (i_k,j_k)_{k=1}^m$ be the set of pairs returned by the procedure where $m = \floor{n/2}$.
It is obvious that the paths connecting each pair are pairwise disjoint.
Let $\ul e$ and $\ol e$ be the lower and upper bound of edge lengths,
so we have a lower bound for distances between each pair $d_{i_k j_k} \geq 2 \ul e$.
Moreover,
\begin{equation}
\sum_{k=1}^m{d_{i_k j_k}} \leq (2n - 2) \ol e.
\label{eqn:lower}
\end{equation}
We will prove by contradiction that the set $\mc{B} = \{ k: d_{i_k j_k} \leq 8 \ol e \}$ has at least $n/8$ elements.
Assume that $|\mc{B}| < n/8$.
Note that
\[
|\mc{C}| = \floor{n/2} \geq \frac{n-1}{2}.
\]
We deduce that
\[
|\mc{C} \setminus \mc{B}| > \frac{n-1}{2} - \frac{n}{8} = \frac{3n-4}{8}.
\]
Therefore
\[
\sum_{k=1}^m{d_{i_k j_k}} \geq  \sum_{k \in \mc{C} \setminus \mc{B}}{d_{i_k j_k}} > 8 \ol e \frac{3n-4}{8} = (3n-4) \ol e \geq (2n - 2) \ol e, ~~ \forall n \geq 2
\]
which contradicts \eqref{eqn:lower}.
Therefore, $|\mc{B}| \geq n/8$.

\end{proof}

Next, we apply our result to trees generated from a pure-birth process \citep{yule1925mathematical}.
This is a classical tree-generating process which assumes that  lineages give birth independently to one another with the same birth rate.

\begin{thm}[Yule process]
If $\tree$ is generated from a Yule process, then the MLE is consistent.
\label{thm:Yule}
\end{thm}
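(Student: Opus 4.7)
The plan is to verify Assumptions~\ref{assump:length} and~\ref{assump:contrast} for a Yule tree with $n$ leaves, with probability at least $1-\delta$, and then invoke Theorem~\ref{thm:consistency}. Recall that a Yule process with birth rate $\lambda$, conditioned on producing $n$ lineages, has independent inter-speciation times $T_k \sim \mathrm{Exp}(k\lambda)$ for $k=1,\ldots,n-1$; every edge length of the resulting tree is a sum of some consecutive $T_j$'s, and in particular each edge length stochastically dominates $\mathrm{Exp}(n\lambda)$.

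For Assumption~\ref{assump:length}, I would use the bound $f(x) \leq \log 2 + \max(0, \log(1/x))$ for all $x>0$, so that if $t_e$ stochastically dominates $\mathrm{Exp}(n\lambda)$ then $\Ef[f(\mu t_e)^2] = O(\log^2 n)$ uniformly for $\mu \in [\ul\mu, \ol\mu]$ (this is a direct computation using $U = n\lambda t_e \sim \mathrm{Exp}(1)$, for which $\Ef[(\log U)^2]$ is finite). Summing over the $2n-2$ edges of $\tree$ gives $\Ef[S_{\tree,\mu}] = O(n\log^2 n)$, and Markov's inequality yields $S_{\tree,\mu} \leq n^\gamma$ with probability $1-\delta/2$ for any fixed $\gamma \in (1,2)$, so Assumption~\ref{assump:length} holds with such a $\gamma$.

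For Assumption~\ref{assump:contrast}, I would exhibit $\Omega(n)$ pairwise disjoint leaf pairs with bounded distances by looking at the second half of the process. Fix $0 < c_1 < c_2 < 1$ (say $c_1 = 1/2$, $c_2 = 3/4$) and consider the $N = \lfloor c_1 n\rfloor + 1$ lineages alive immediately after the $\lfloor c_1 n\rfloor$-th speciation event. Call such a lineage \emph{good} if it splits at some event with index $k \in \{\lfloor c_1 n\rfloor + 1,\ldots,\lfloor c_2 n\rfloor\}$ and neither of its two daughter lineages splits before the process terminates. A good lineage contributes a cherry whose inter-leaf distance equals $2(\tau_{n-1} - \tau_k)$, where $\tau_k = T_1 + \cdots + T_k$; since $\tau_{n-1} - \tau_{\lfloor c n\rfloor}$ concentrates around $\lambda^{-1}\log(1/c)$ by Chebyshev applied to sums of independent exponentials, these cherry distances lie in a fixed interval $[\ul d, \ol d]$ with probability $1-\delta/4$. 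Moreover, the paths of distinct good cherries are pairwise disjoint since they lie in disjoint subtrees rooted at distinct good lineages.

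The main obstacle is to show that $\Omega(n)$ of the $N$ lineages are good. The embedded P\'olya urn representation of the Yule tree (at each event one of the extant lineages is chosen uniformly at random to split) makes the lineages alive at time $\tau_{\lfloor c_1 n \rfloor}$ exchangeable, and the probability that a fixed lineage is good tends to a positive constant $p(c_1,c_2) > 0$ as $n\to\infty$ (computable from the Beta-Binomial descendant-count distribution of the urn and the near-independence of the inter-event times from the topology). A second-moment calculation, exploiting the weak pairwise correlations induced by the urn, then yields $\Omega(n)$ good lineages with probability $1-\delta/4$. Combining the two verified assumptions with Theorem~\ref{thm:consistency} completes the proof.
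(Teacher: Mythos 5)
Your overall strategy coincides with the paper's: verify Assumptions~\ref{assump:length} and~\ref{assump:contrast} with high probability and invoke Theorem~\ref{thm:consistency}. The implementations differ, though. For Assumption~\ref{assump:length} the paper bounds the minimum edge length from below (via $\Pf(\min_k t_k \geq \epsilon_n) = \exp(-\tfrac{n(n+1)}{2}\lambda\epsilon_n) \to 1$) and then uses the remark following Assumption~\ref{assump:length}; your route through $\Ef[f(\mu t_e)^2] = \mc{O}(\log^2 n)$, $\Ef[S_{\tree,\mu}] = \mc{O}(n\log^2 n)$ and Markov's inequality is correct and, if anything, cleaner, since it avoids any delicate choice of the exponent in the minimum-edge-length bound. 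One bookkeeping caveat: with only the inter-speciation times $T_1,\dots,T_{n-1}$ and the tree read off at the instant the $n$-th lineage appears, the two pendant edges of the most recent cherry have length zero, and your claim that every edge stochastically dominates $\mathrm{Exp}(n\lambda)$ fails for them ($f(0)=\infty$). You should adopt the paper's convention and include the holding time $t_n \sim \mathrm{Exp}(n\lambda)$ during which the tree has $n$ lineages; then the claim is valid for all $2n-2$ edges and your computation goes through.

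For Assumption~\ref{assump:contrast} the paper does not argue from scratch: it cites Corollary 4 of \citet{ane2016phase} to get $\Omega(n)$ internal nodes with age in a fixed window $[c_1,c_2]$ with probability $1-\mc{O}(n^{-1})$, and then runs a greedy pairing procedure (their Lemma 3) to extract $\Omega(n)$ vertex-disjoint leaf pairs with distances in $[2c_1,2c_2]$. Your replacement—counting ``good'' lineages among those alive after the $\floor{c_1 n}$-th event that split once in the window and whose daughters never split again, which directly yields disjoint cherries with concentrated inter-leaf distances—is a sound and self-contained alternative, and the time-concentration and disjointness parts of it are fine. However, the step you yourself identify as the main obstacle is only asserted: you state that the probability a fixed lineage is good tends to a positive constant $p(c_1,c_2)$ and that ``a second-moment calculation, exploiting the weak pairwise correlations induced by the urn'' gives $\Omega(n)$ good lineages, but neither the limiting probability (a Beta-binomial/uniform-composition computation, e.g.\ $\Pf(D=2)=\binom{n-3}{N-2}/\binom{n-1}{N-1}$ for the descendant count $D$ of one of $N\approx c_1 n$ lineages, refined by the split-index constraint) nor the covariance bound making Chebyshev work is actually carried out. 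These computations are standard and I believe they close, but as written this is the one genuine hole in the argument; it can be filled either by doing the urn calculation explicitly or simply by invoking the same result of \citet{ane2016phase} that the paper uses.
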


\begin{proof}

Again, we only need to check Assumptions \ref{assump:length} and \ref{assump:contrast}.
Let $(t_k)_{k=1}^n$ be the amount of time during which $\tree$ has $k$ lineages.
Then, $(t_k)_{k=1}^n$ are independent exponential random variables with rate $k\lambda$ where $\lambda$ is the birth rate of the Yule process.
Let $e_{\min}$ be the minimum edge length of $\tree$.
It is sufficient for us to show that $e_{\min} \geq e^{-n (\gamma - 1)/2}$.
Indeed, we have
\begin{align*}
\Pf \left(e_{\min} \geq e^{-n (\gamma - 1)/2}\right) & = \Pf \left(\min_k {t_k} \geq e^{-n (\gamma - 1)/2}\right) \\
&= \exp \left (- \frac{n(n+1)}{2} \lambda e^{-n (\gamma - 1)/2} \right )
\end{align*}
which converges to $1$ as $n \to \infty$.
Hence, Assumption \ref{assump:length} is satisfied asymptotically with $\gamma > 1$.

Define the age of an internal node to be its distance to the leaves.
By Corollary 4 of \citet{ane2016phase}, there exist $0 < c_1 < c_2 < + \infty$ such that the probability of there being $\Omega(n)$ internal nodes with age between $c_1$ and $c_2$ is $1 - \mc{O}(n^{-1})$.
Let $\mc{I}$ be the set of internal nodes which have age between $c_1$ and $c_2$, we select $\Omega(n)$ pairs of leaves as follows:
\begin{enumerate}
\item Start with an internal node $i \in \mc{I}$ that has the smallest age. Pick two of its descendants such that $i$ is their most recent common ancestor to form a pair. \label{step1b}
\item Remove $i$ and its parent node from $\mc{I}$. \label{step2b}
\item Repeat step \ref{step1b} and \ref{step2b} until $\mc{I}$ is empty.
\end{enumerate}
This procedure is similar to the procedure for selecting contrasts in \citet[][Lemma 3]{ane2016phase}.
Hence, it selects at least $|\mc{I}|/2$ pairs, which is of order $\Omega(n)$.
Since these internal nodes have age between $c_1$ and $c_2$, Assumption \ref{assump:contrast} is satisfied.

\end{proof}

Finally, we consider the coalescent point process (CPP), which generates a random ultrametric tree with a given height $T$ \citep[see][for more details]{lambert2013birth}.
Conditioning on the number of species $n$,  the internal node ages $(t_k)_{k=1}^{n-1}$ of the tree generated from the CPP are independent and identically distributed according to a probability distribution in $[0,T]$.
The probability density function $\phi$ of this common distribution is called the coalescent density.
We say that a CPP is \emph{regular} if its common distribution is not a point mass at $0$ and the coalescent density is bounded from above.

\begin{thm}[Coalescent point process]
If $\tree$ is generated from a regular coalescent point process, then the MLE is consistent.
\label{thm:CPP}
\end{thm}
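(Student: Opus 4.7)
The plan is to verify that Assumptions~\ref{assump:length} and~\ref{assump:contrast} hold with probability tending to $1$ for a regular CPP, and then invoke Theorem~\ref{thm:consistency}. The strategy closely parallels the proof of Theorem~\ref{thm:Yule}, with the two regularity hypotheses (bounded coalescent density; no point mass at $0$) each feeding one of the two assumptions.

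For Assumption~\ref{assump:length}, I will use that $\phi \le M$ for some finite $M$, which bounds the marginal density of each $t_l$ as well as the convolutional density of every pairwise difference $t_{l'} - t_l$ by $M$. Every edge length in a CPP tree is either a single node age (the leaf edge at leaf $l$ has the form $\min(t_{l-1}, t_l)$) or a difference of two node ages (internal edges). Hence $\Pf(|t_{l'} - t_l| < \epsilon) \le 2 M \epsilon$ and $\Pf(t_l < \epsilon) \le M \epsilon$, and a union bound over the $n-1$ singletons and $\binom{n-1}{2}$ pairs gives $\Pf(e_{\min} < \epsilon) \le C M n^2 \epsilon$. Taking $\epsilon = e^{-n^{(\gamma-1)/2}}$ for any $\gamma > 1$ makes this probability vanish, and by the remark following Assumption~\ref{assump:length} this verifies the assumption asymptotically.

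For Assumption~\ref{assump:contrast}, I use that the common distribution of node ages is not a point mass at $0$, so there exist $0 < c_1 < c_2 \le T$ with $p := \Pf(t \in [c_1, c_2]) > 0$. Since $(t_k)_{k=1}^{n-1}$ are iid, the count of internal nodes with age in $[c_1, c_2]$ is Binomial$(n-1, p)$, which by Chernoff's bound is at least $(n-1)p/2$ with probability $1 - e^{-c'n}$. Denoting this set $\mc{I}$, I then apply the same selection procedure as in the proof of Theorem~\ref{thm:Yule}, borrowed from \citet[Lemma 3]{ane2016phase}, to obtain $|\mc{I}|/2 = \Omega(n)$ pairs of leaves whose connecting paths are pairwise disjoint and of length in $[2c_1, 2c_2]$, establishing Assumption~\ref{assump:contrast}. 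I expect the most delicate step to be Assumption~\ref{assump:length}: one must argue that, despite the random tree topology, the full set of possible edge lengths is captured by $O(n^2)$ simple functions of the iid $(t_k)$'s, so that the stated union bound applies. Once both assumptions are verified asymptotically, Theorem~\ref{thm:consistency} immediately yields consistency of the MLE.
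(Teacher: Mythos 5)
Your proposal is correct and reaches the same two-part structure as the paper (verify Assumptions \ref{assump:length} and \ref{assump:contrast}, then invoke Theorem \ref{thm:consistency}), and your treatment of Assumption \ref{assump:contrast} is essentially identical to the paper's: choose $c_1>0$ with positive mass on $[c_1,T]$, use concentration of the iid node ages (you use Chernoff, the paper uses the strong law of large numbers) to get $|\mc{I}|=\Omega(n)$, then reuse the selection procedure from the Yule proof. Where you genuinely diverge is the minimum-edge-length step for Assumption \ref{assump:length}. The paper lower-bounds $e_{\min}$ by $\min\bigl\{\min_{i<j}|t_i-t_j|,\ \min_i t_i,\ \min_i (T-t_i)\bigr\}$ and controls the pairwise-spacing term via the Poisson limit theorem of \citet{jammalamadaka1986limit}, showing $\Pf(n^2\min_{i<j}|t_i-t_j|\le\epsilon)\to 1-\exp(-c\epsilon\int\phi^2)$ and letting $\epsilon\to 0$; you instead use a direct union bound over the $O(n^2)$ singletons and pairwise differences, exploiting only that $\phi\le M$ bounds both the marginal density and the density of each difference, giving the non-asymptotic estimate $\Pf(e_{\min}<\epsilon)\le CMn^2\epsilon$. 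Your route is more elementary and self-contained and suffices entirely (any $\epsilon$ with $n^2\epsilon\to 0$ works, e.g.\ the paper's $1/n^3$ or your choice); the paper's heavier machinery buys the sharper statement, highlighted in its Discussion, that $\min_{i<j}|t_i-t_j|$ is genuinely of order $n^{-2}$, which is more than consistency requires. Two small points to tighten: your catalogue of edge lengths should also include edges of the form $T-t_i$ reaching the origin at height $T$ (the paper's bound includes $\min_i(T-t_i)$); these are handled by exactly the same density bound, so nothing breaks. Also note that the paper's remark after Assumption \ref{assump:length} states the threshold as $\Omega(e^{-n(\gamma-1)/2})$, while your $e^{-n^{(\gamma-1)/2}}$ is in fact the scale that makes $S_{\tree,\mu}=\mc{O}(n\cdot n^{\gamma-1})=\mc{O}(n^\gamma)$; in any case both are far smaller than the polynomial threshold your union bound supports, so the verification stands.
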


\begin{proof}

Denote the minimum edge length of $\tree$ by $e_{\min}$.
We have
\[
e_{\min} \geq \min \left \{ \min_{1 \leq i < j \leq n-1}{|t_i - t_j|}, \min_{1 \leq i \leq n-1}{t_i}, \min_{1 \leq i \leq n-1}{(T - t_i)} \right \} .
\]
Therefore,
\begin{multline*}
\Pf \left (e_{\min} \leq \frac{1}{n^3} \right ) \leq \Pf \left (\min_{1 \leq i < j \leq n-1}{|t_i - t_j|} \leq \frac{1}{n^3} \right ) \\
 + \Pf \left ( \min_{1 \leq i \leq n-1}{t_i} \leq \frac{1}{n^3} \right ) + \Pf \left ( \min_{1 \leq i \leq n-1}(T - t_i) \leq \frac{1}{n^3} \right ).
\end{multline*}
By the results of Section 4 in \citet{jammalamadaka1986limit} (for which the details will be provided in the Appendix), we have
\begin{equation}
\Pf \left (n^2 \min_{1 \leq i < j \leq n-1}{|t_i - t_j|} \leq \epsilon \right ) \to 1 - \exp \left (-c \epsilon \int_0^T{\phi^2} \right )
\label{eqn:jamma}
\end{equation}
for any $\epsilon > 0$. We deduce that
\begin{align*}
\lim_{n \to \infty} \Pf \left (n^2 \min_{1 \leq i < j \leq n-1}{|t_i - t_j|} \leq \frac{1}{n^3} \right ) & \leq \lim_{n \to \infty} \Pf \left (n^2 \min_{1 \leq i < j \leq n-1}{|t_i - t_j|} \leq \epsilon \right )\\
& = 1 - \exp \left (-c \epsilon \int_0^T{\phi^2} \right ).
\end{align*}
Since $\epsilon$ can be arbitrary small, we conclude
\[
\lim_{n \to \infty} \Pf \left (\min_{1 \leq i < j \leq n-1}{|t_i - t_j|} \leq \frac{1}{n^3} \right ) = 0.
\]
On the other hand
\[
\Pf \left ( \min_{1 \leq i \leq n-1}{t_i} \leq \frac{1}{n^3} \right ) = 1 - \Pf \left ( \min_{1 \leq i \leq n-1}{t_i} \geq \frac{1}{n^3} \right ) = 1 - \left ( \int_{1/n^3}^T{\phi} \right )^n.
\]
Note that $\phi$ is bounded from above by $M$ in $[0,T]$, thus
\[
\left ( \int_{1/n^3}^T{\phi} \right )^n = \left ( 1 - \int_{0}^{1/n^3}{\phi} \right )^n \geq \left (1 - \frac{M}{n^3} \right )^n \to 1.
\]
where $M$ is the upper bound of $\phi$.
Hence,
\[
\lim_{n \to \infty} \Pf \left ( \min_{1 \leq i \leq n-1}{t_i} \leq \frac{1}{n^3} \right ) = 0.
\]
Similarly,
\[
\lim_{n \to \infty} \Pf \left ( \min_{1 \leq i \leq n-1}(T - t_i) \leq t \right ) = 0.
\]
Therefore,
\[
\Pf \left (e_{\min} \geq \frac{1}{n^3} \right ) \to 1.
\]
Hence, Assumption \ref{assump:length} is satisfied.

Since the common distribution is not a point mass at $0$, there exist a constant $c > 0$ such that $\int_c^T{\phi} > 0$.
Let $\mc{I}$ be the set of internal nodes of $\tree$ which have age between $c$ and $T$.
Then, Assumption \ref{assump:contrast} holds if we can prove that $|\mc{I}|$ is of order $\Omega(n)$ because we can use the same argument as in the proof of Theorem \ref{thm:Yule}.
By strong law of large numbers, we have
\[
\frac{|\mc{I}|}{n-1} \to \int_c^T{\phi} > 0,
\]
which completes the proof.

\end{proof}


\section{Practical implication}

The consistency property proved in Theorem \ref{thm:consistency} suggests that adding more taxa helps to significantly improve the accuracy of the MLE of the transition rate.
It is worth noticing that this property does not hold in many scenarios \citep[see][]{li2008more, ane2008analysis, ho2013asymptotic, ho2014intrinsic}.
To illustrate our theoretical results, we perform the following simulation using the \texttt{R} package \texttt{geiger} \citep{harmon2007geiger, pennell2014geiger}.
We simulate $100$ trees (the number of taxa varies from $50$ to $2000$) according to the Yule process with birth rate $\lambda = 1$.
For each tree, we simulate $100$ traits under the $2$-state symmetric model with the transition rate $\mu = 0.5$.
For each of these 100 traits, the MLE for the transition rate is computed separately. The result is summarized in Figure \ref{fig:MLE}.
We can see that the MLEs concentrate more and more around the true transition rate as the number of species increases.

\begin{figure}[h]
\centering
\includegraphics[scale = 0.5]{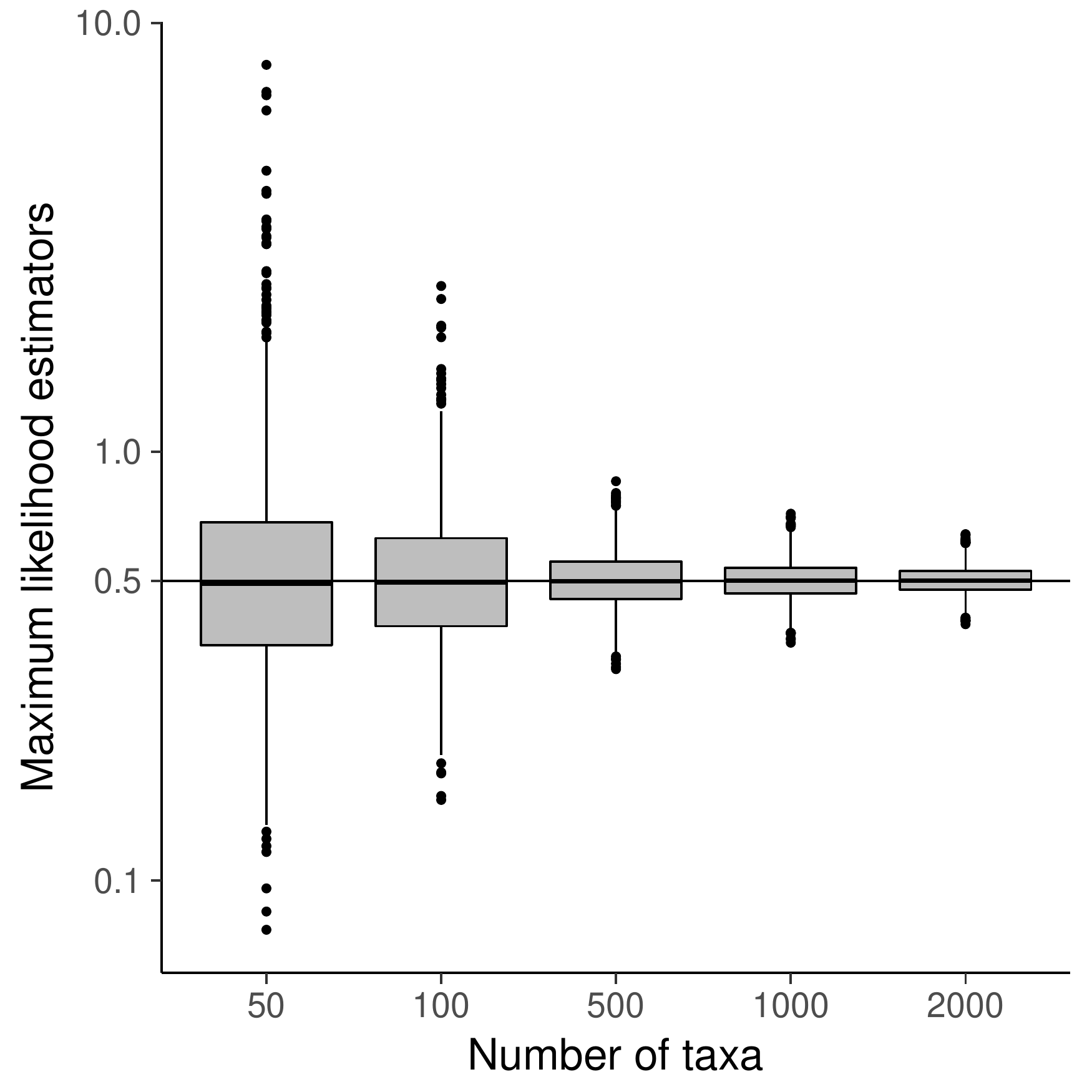}
\caption{Box plots of the MLEs for the transition rate of the $2$-state symmetric model on trees generated from the Yule process with birth rate $\lambda = 1$. The true value of the transition rate is $0.5$.}
\label{fig:MLE}
\end{figure}


\section{Discussion}

In this paper, we investigate the convergence of the MLE for the transition rate of a $2$-state symmetric model under regularity conditions on tree shape.
These conditions ensure that edge lengths of the tree are not too small and the pairwise distances between leaves are not too extreme.
For example, these conditions are satisfied when the edge lengths of the tree are bounded from above and below.
We have also verified that trees generated from pure-birth process and coalescent point process satisfy these conditions.
On the other hand, whether these sufficient conditions are also necessary conditions remains open.

Our results suggest that adding data in terms of the number of tips tends to improve the accuracy of the MLE of the transition rate in general.
To demonstrate this result, we use simulations to confirm that the MLE of the transition rate is consistent when the tree is generated from the Yule process.
It is worth noticing that for evolutionary data, the MLE for other quantities of interest in evolutionary studies can be inconsistent \citep{li2008more, ane2008analysis, ho2013asymptotic, ho2014intrinsic}.
In these situations, adding more data is a waste of resources because it does not significantly improve the precision of the MLE.
Therefore, it is important to study the consistency of the MLE in the context of trait evolution models.

Beside its practical biological application, the paper also investigates many theoretical properties of tree-generating processes.
Specifically, in Theorem $\ref{thm:Yule}$ and Theorem $\ref{thm:CPP}$, we consider the problem of bounding the minimum edge length $e_{min}$ of a tree generated by Yule and coalescent point processes.
The theorems show that generically, $e_{min}$ is bounded from below by a function of order $n^{-2}$ for the Yule process and $n^{-3}$ for the coalescent point process.
Since the minimum edge length plays an important role in many phylogenetic problems, these results may be of independent interest.

Finally, we remark that if we have multiple independent traits which evolve according to the same 2-state Markov process, the distance from the MLE to the true value of the transition rate will decrease linearly with respect to the number of traits.
This result comes from the fact that traditional properties of MLE can be applied because these traits are independent.
Therefore, incorporating additional traits to the analysis is another way to improve the precision of the MLE.


%


\appendix

\section{Proofs}
\label{sec:proof}

\subsection{Proof of lemma \ref{lem:indep}}

Note that by symmetry, we have $\Pf(\mb{Y} = \mb{y}~|~\rho = 0) = \Pf(1 - \mb{Y} =  \mb{y}~|~ \rho = 1)$.
We deduce that
\begin{align*}
\Pf(h(\mb{Y}) = x~|~\rho = 0) &= \Pf(\mb{Y} \in h^{-1}(x)~|~\rho = 0) \\
&= \Pf(\mb{1} - \mb{Y} \in h^{-1}(x)~|~\rho = 1) \\
&= \Pf(h(\mb{1} - \mb{Y}) = x~|~\rho = 1) \\
&= \Pf(h(\mb{Y}) = x~|~\rho = 1)
\end{align*}
which completes the proof.


\subsection{Proof of lemma \ref{lem:subtrees}}

Denote $P^{(u)}_v = \Pf(\mb{Y_u} ~|~ \tree_u, \mu, \rho_u = v)$ for $u \in \{ 0, 1\}$, $v \in \{ 0,1 \}$.
We have
\[
P_{\tree_1,\mu}(\mb{Y_1}) P_{\tree_2,\mu}(\mb{Y_2}) = \frac{1}{4} \sum_{u,v \in \{ 0, 1\}}{P^{(1)}_u P^{(2)}_v}.
\]
Moreover
\[
P_{\tree,\mu}(\mb{Y}) = \frac{1 + e^{-2 \mu d}}{4} \sum_{u \in \{ 0, 1 \}}{P^{(1)}_u P^{(2)}_u} + \frac{1 - e^{-2 \mu d}}{4} \sum_{u \in \{ 0, 1 \}}{P^{(1)}_u P^{(2)}_{1-u}}.
\]
Therefore
\[
\frac{1}{1 - e^{-2 \mu d}} P_{\tree_1,\mu}(\mb{Y_1}) P_{\tree_2,\mu}(\mb{Y_2}) \leq P_{\tree,\mu}(\mb{Y}) \leq 2 P_{\tree_1,\mu}(\mb{Y_1}) P_{\tree_2,\mu}(\mb{Y_2}).
\]


\subsection{Proof of lemma \ref{lem:Lipschitz}}

Without loss of generality, we assume that $\mu_1 < \mu_2$.
By the mean value theorem, there exists $\tilde \mu_{uv} \in (\mu_1, \mu_2)$ for any $u, v \in \{ 0, 1\}$ such that
\[
\left| \log [\mb{P}_{\mu_1}(t)]_{uv} - \log [\mb{P}_{\mu_2}(t)]_{uv} \right| = \frac{ t e^{- 2 \tilde \mu_{uv} t}}{[\mb{P}_{\tilde \mu_{uv}}(t)]_{uv}} |\mu_1 - \mu_2| \leq \frac{ t e^{- 2 \tilde \mu_{uv} t}}{1 - e^{- 2 \tilde \mu_{uv} t}} |\mu_1 - \mu_2|.
\]
We observe that there exists a $C_{\underline{\mu}, \overline{\mu}}>0$ such that
\[
 \sup_{t \ge 0; \tilde \mu_{uv} \in (\underline{\mu}, \overline{\mu})} {\frac{ t e^{- 2 \tilde \mu_{uv} t}}{1 - e^{- 2 \tilde \mu_{uv} t}}} \le C_{\underline{\mu}, \overline{\mu}}.
\]
Therefore,
\[
| \log [\mb{P}_{\mu_1}(t)]_{uv} - \log [\mb{P}_{\mu_2}(t)]_{uv} | \leq C_{\underline{\mu}, \overline{\mu}} |\mu_1 - \mu_2|.
\]
This implies that
\begin{equation}
[\mb{P}_{\mu_1}(t)]_{uv} \leq e^{C_{\underline{\mu}, \overline{\mu}} |\mu_1 - \mu_2|} [\mb{P}_{\mu_2}(t)]_{uv}.
\label{eqn:Pbound}
\end{equation}
Note that
\[
P_{\tree,\mu}(\mb{Y}) =  \frac{1}{2} \sum_{y}{\left ( \prod_{(u,v)\in E}{[\mb {P}_\mu( d_{uv})}]_{y_u y_v} \right )}.
\]
By applying \eqref{eqn:Pbound} for all $2n-3$ edges on the tree, we deduce that
\[
P_{\tree,\mu_1}(\mb{Y}) \leq e^{(2n-3) C_{\underline{\mu}, \overline{\mu}} |\mu_1 - \mu_2| } P_{\tree,\mu_2}(\mb{Y}) .
\]
Hence,
\[
|\ell_{\tree,\mu_1}(\mb{Y}) - \ell_{\tree,\mu_2}(\mb{Y})| \leq (2n-3) C_{\underline{\mu}, \overline{\mu}} |\mu_1 - \mu_2|,
\]
which validates the lemma.


\subsection{Proof of lemma \ref{lem:approx}}

For all $x, y$, we have $|u(x) - u(y)| \le |v(x)-v(y)| + 2c$.
Let $Y$ be an independent and identically distributed copy of $X$, we have
\begin{align*}
2\var[u(X)] &=  \Ef_{X}[u(X)^2] +  \Ef_{Y}[u(Y)^2]  - 2  \Ef_{X}[u(X)]  \Ef_{Y}[u(Y)]\\
&= \Ef_{X, Y}\left(u(X)^2 + u(Y)^2 - 2 u(X) u(Y)\right)\\
&= \Ef_{X, Y}\left([u(X)-u(Y)]^2\right) \\
&\le \Ef_{X, Y}\left( [|v(X)-v(Y)| + 2c]^2 \right).
\end{align*}
Note that for all $z, c \in \mathbb{R}$ and $\omega>1$,
\[
(z + 2c)^2 \le \omega z^2 + \frac{4\omega}{\omega -1} c^2.
\]
Therefore,
\begin{align*}
2\var[u(X)]  & \le \omega \Ef_{X, Y}\left([v(X)-v(Y)]^2 \right) +\frac{4\omega}{\omega -1}c^2\\
& = 2 \omega \var[v(X)] +\frac{4\omega}{\omega -1}c^2.
\end{align*}


\subsection{Proof of Equation \eqref{eqn:jamma}}

In order to establish Equation \eqref{eqn:jamma}, we use the following Lemma.

\begin{lemma}[Remark 3.4 in \citet{jammalamadaka1986limit}]
Let $X_1, X_2, \ldots, X_n$ be an i.i.d. sequence of random variables and $f_n(x, y)$ be an indicator function on $\mathbb{R}^2$ such that
\[
n^3 E[f_n(X_1, X_2) f_n(X_1, X_3)] \to 0 ~~~ \text{and} ~~~\frac{1}{2}n^2E[f_n(X_1, X_2)] \to \lambda
\]
for some constant $\lambda>0$.
Define $U_n =\sum_{1 \le i< j \le n}{f_n(X_i, X_j)}$.

Then $U_n \to_d \text{Poisson}(\lambda)$.
\end{lemma}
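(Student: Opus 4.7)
The plan is to prove this Poisson limit for a U-statistic with $0/1$ kernel via the Chen-Stein method for Poisson approximation of a sum of dependent indicators. I would index the summands of $U_n$ by unordered pairs $\alpha = \{i,j\}$ with $1 \le i < j \le n$, writing $I_\alpha = f_n(X_i, X_j)$ and $p_\alpha = \Ef[I_\alpha]$. Because the $X_i$ are i.i.d., the indicators $I_\alpha$ and $I_\beta$ are independent whenever $\alpha \cap \beta = \emptyset$, so the natural dependency neighborhood is $B_\alpha = \{\beta : \alpha \cap \beta \neq \emptyset\}$, which has cardinality $|B_\alpha| = 2(n-2) + 1 = O(n)$.

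The Chen-Stein theorem then bounds the total variation distance between the law of $U_n$ and a $\mathrm{Poisson}(\Ef[U_n])$ random variable by $b_1 + b_2$, where
\[
b_1 = \sum_\alpha \sum_{\beta \in B_\alpha} p_\alpha p_\beta, \qquad b_2 = \sum_\alpha \sum_{\beta \in B_\alpha \setminus \{\alpha\}} \Ef[I_\alpha I_\beta].
\]
Hence the proof reduces to verifying $\Ef[U_n] \to \lambda$ together with $b_1, b_2 \to 0$.

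The mean convergence is immediate: $\Ef[U_n] = \binom{n}{2} \Ef[f_n(X_1,X_2)] \sim \tfrac{1}{2} n^2 \Ef[f_n(X_1,X_2)] \to \lambda$ by the second hypothesis. For $b_1$, the same hypothesis gives $p_\alpha = O(n^{-2})$, and there are $\binom{n}{2} = O(n^2)$ indices $\alpha$, each with $O(n)$ neighbors, yielding $b_1 = O(n^2) \cdot O(n) \cdot O(n^{-4}) = O(n^{-1}) \to 0$. For $b_2$, the i.i.d. structure reduces every cross moment $\Ef[I_\alpha I_\beta]$ with $|\alpha \cap \beta| = 1$ to the common value $\Ef[f_n(X_1,X_2)f_n(X_1,X_3)]$; there are $O(n^3)$ such ordered neighbor pairs, so $b_2 \le O(n^3) \cdot \Ef[f_n(X_1,X_2)f_n(X_1,X_3)] \to 0$ by the first hypothesis. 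Combined with the total-variation continuity of the Poisson family in its mean, these facts give $U_n \to_d \mathrm{Poisson}(\lambda)$.

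The main obstacle is the bookkeeping around the dependency neighborhoods: one must carefully distinguish pairs $\beta$ with $|\alpha \cap \beta| = 2$ (trivially $\beta = \alpha$, contributing only through the diagonal of $b_1$) from pairs with $|\alpha \cap \beta| = 1$ (the only nontrivial contribution to $b_2$), and verify that disjoint pairs indeed yield independence so they never appear in $b_2$. An alternative route that sidesteps Chen-Stein is the classical method of factorial moments: one shows directly that $\Ef\!\bigl[U_n(U_n-1)\cdots(U_n-r+1)\bigr] \to \lambda^r$ for each fixed $r \ge 1$ by splitting the sum over $r$-tuples of pairs according to how many indices are shared, where the ``all-disjoint'' configuration dominates via the second hypothesis and every other configuration is killed by the triple-moment hypothesis.
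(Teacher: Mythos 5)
Your argument is correct, but note that the paper never proves this statement at all: it is imported verbatim as Remark~3.4 of \citet{jammalamadaka1986limit} and used only as a black box to establish the displayed limit \eqref{eqn:jamma} in the proof of Theorem~\ref{thm:CPP}, so there is no in-paper proof to compare against. What you supply is a legitimate self-contained derivation via the Chen--Stein (Arratia--Goldstein--Gordon) bound: the mean computation, the $b_1 = O(n^{-1})$ estimate from $p_\alpha = O(n^{-2})$ with $O(n)$ neighbors, and the $b_2 = O(n^3)\,\Ef[f_n(X_1,X_2)f_n(X_1,X_3)] \to 0$ estimate are all right, and the final passage from $d_{TV}(\mathcal{L}(U_n),\mathrm{Poisson}(\Ef U_n)) \to 0$ plus $\Ef U_n \to \lambda$ to the stated weak limit is standard. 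One small point to tighten: to drop the $b_3$ term you need $I_\alpha$ to be independent of the \emph{entire family} $\{I_\beta : \beta \cap \alpha = \emptyset\}$ jointly, not merely pairwise as you phrase it; this is immediate here because that family is a function of $\{X_k : k \notin \alpha\}$, which is independent of $(X_i, X_j)$, but it should be stated that way since the Chen--Stein theorem requires independence from the sigma-field outside the neighborhood. Your alternative sketch via factorial moments $\Ef[U_n(U_n-1)\cdots(U_n-r+1)] \to \lambda^r$ is also viable and is closer to the classical treatments of interpoint-distance Poisson limits; the Chen--Stein route buys an explicit total-variation rate, while the moment route avoids invoking the approximation theorem but requires the combinatorial bookkeeping over $r$-tuples of pairs for every $r$.
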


We apply this Lemma with $f_n(x, y) = I\{|x-y| < r_n\}$ where $r_n = \epsilon/n^2$ for the sequence $t_1, t_2, \ldots, t_n$ of the coalescent point process.
Note that by Equation (4.3) in \citet{jammalamadaka1986limit},
\[
\frac{1}{2}n^2E[f_n(t_1, t_2)] \to c \epsilon \int_{0}^T{\phi(x)^2 dx}
\]
for some constant $c>0$.
On the other hand, we have
\begin{align*}
E[f_n(t_1, t_2) f_n(t_1, t_3)] &= E[(E[f_n(t_1, t_2) \mid t_1] )^2]\\
&= \int_{0}^T{\left(\int_{t - r_n}^{t+r_n}{\phi(\tau)d\tau}\right)^2\phi(t) dt}\le \frac{4 \| \phi \|_\infty^2 \epsilon^2}{n^4}.
\end{align*}
Therefore, $n^3 E[f_n(t_1, t_2) f_n(t_1, t_3)] \to 0$.
Hence,
\[
\Pf \left (n^2 \min_{1 \leq i < j \leq n-1}{|t_i - t_j|} \leq \epsilon \right ) = P(U_n = 0) \to 1 - \exp \left (-c \epsilon \int_0^T{\phi^2} \right ).
\]


\bibliographystyle{chicago}
\bibliography{ms}

\end{document}